\documentclass[11pt,a4paper,reqno]{amsart}
\usepackage{amsthm,amsmath,amsfonts,amssymb,amsxtra,appendix,bookmark,euscript,delarray,dsfont,mathrsfs}
\usepackage{enumerate}
\usepackage{amssymb}
\usepackage[utf8]{inputenc}
\addtolength{\hoffset}{-10mm}
\addtolength{\textwidth}{20mm}
\usepackage{xcolor}
\usepackage{graphicx}
\usepackage{upgreek}
\usepackage[normalem]{ulem}
\usepackage{ stmaryrd }

\usepackage{tikz}
\usetikzlibrary{decorations.pathreplacing, calc}

\theoremstyle{plain}
\newtheorem{theorem}{Theorem}
\newtheorem{lemma}{Lemma}
\newtheorem*{lemma*}{Lemma}

\newtheorem{proposition}{Proposition}[section]

\newtheorem*{conjecture*}{Conjecture}

\theoremstyle{definition}

\theoremstyle{remark}
\newtheorem{remark}{Remark}

%CUSTOM COMMANDS

\renewcommand{\d}{\mathrm{d}}                       % differential

\newcommand{\dd}{\,\mathrm{d}}     

%Operators
\DeclareMathOperator{\spec}{Spec}

\DeclareMathOperator{\supp}{supp}

%Notations

\def\epsilon{\varepsilon}

\newcommand{\norm}[1]{\left\lVert #1 \right\rVert}
\newcommand{\abs}[1]{\left\lvert#1\right\rvert}
\renewcommand{\phi}{\varphi}
\newcommand\1{{\ensuremath {\mathds 1} }}

\newcommand{\ket}[1]{|#1 \rangle}       % ket vector
\newcommand{\bra}[1]{\langle #1 |}      % bra vector

%Letters
\def\Neu{\mathrm{Neu}}

\def\C{\mathbb{C}}
\def\N{\mathbb{N}}

\def\R{\mathbb{R}}

\def\Z{\mathbb{Z}}

\def\a{\mathfrak{a}}

\def\cF {\mathcal{F}}
\def\cH {\mathcal{H}}

\def\cN {\mathcal{N}}

\def\eps {\varepsilon}

\newcommand		{\lt}			{\left}				
\newcommand		{\rt}			{\right}

\newcommand		{\bangle}[1]	{\lt\langle #1\rt\rangle}
\newcommand		{\weight}[1]	{\bangle{#1}}

\makeatletter 

\newcommand{\Rmnum}[1]{\expandafter\@slowromancap\romannumeral #1@}

%BLACKBOARD GREEK DELTA

\allowdisplaybreaks

%ARTICLE
\title[Kinetic Localization in Bose Gases]{Kinetic localization via Poincar\'e-type inequalities and applications to the condensation of Bose gases}

\author[J.J. Chong]{Jacky J. Chong}
\address{School of Mathematics and Statistics, Beijing Institute of Technology, Beijing, China}
\email{jwchong@bit.edu.cn}

\author[H. Liang]{Hao Liang}
\address{School of Mathematical Sciences, Peking University, Beijing, China}
\email{leunghao@stu.pku.edu.cn}

\author[P.T. Nam]{Phan Th\`anh Nam}
\address{Department of Mathematics, LMU Munich, Theresienstrasse 39, 80333 Munich, and Munich Center for Quantum Science and Technology (MCQST), Schellingstr. 4, 80799 Munich, Germany}
\email{nam@math.lmu.de}

\begin{document}

\begin{abstract} 
We propose a simplified localization method for Bose gases, based on a Poincare-type inequality, which leads to a new derivation of Bose--Einstein condensation for dilute Bose gases beyond the Gross--Pitaevskii scaling regime.
\end{abstract}

\subjclass[2020]{Primary 26D10; Secondary 46N50, 82B10, 81V70.}

\keywords{Poincare inequality, many-body quantum mechanics, dilute Bose gases, Bose--Einstein condensation,  Bogoliubov theory, Neumann boundary condition}

\maketitle
	
% \tableofcontents
	
\section{Introduction}

The classical Poincar\'e inequality (also known as the Poincar\'e--Wirtinger inequality) asserts that for every bounded Lipchitz domain $\Omega\subset \R^d$, there exists a positive constant $C_{\Omega}>0$ such that 
\begin{align}\label{eq:Poincare}
\int_{\Omega} |u(x) - \left\langle u \right\rangle_{\Omega}|^2 \dd x  \le C_\Omega \int_{\Omega} |\nabla u(x)|^2 \dd x
\end{align}
for all $u\in H^1(\Omega)$ (see, e.g., \cite[Theorem 8.11]{lieb2001analysis}). Here, $\left\langle f \right\rangle_{A}:=|A|^{-1}\int_{A}f$  is the integral mean of $f$, with $|A|$ the Lebesgue measure of $A\subset \R^d$. For convenience, we write $\int_A f$ instead of $\int_A f\dd x$. Equivalently,  \eqref{eq:Poincare} can be interpreted as the existence of a spectral gap of size $C_\Omega^{-1}$ between the first eigenvalue and the second eigenvalue of the Neumann Laplacian  $-\Delta^{\Neu}_\Omega$ on $L^2(\Omega)$, namely
\begin{align}\label{eq:Poincare-Neumann}
 Q_\Omega  \le C_\Omega (-\Delta^{\Neu}_\Omega)\, , 
\end{align}
where $Q_A$ is an orthogonal projection on $L^2(\Omega)$ defined by
$$
Q_A := \1_A - P_A, \quad P_A := \frac{1}{|A|} \ket{\1_A}\!\bra{\1_A}
$$
 with $\1_A$ the indicator function of $A\subset \Omega$.   A general feature of \eqref{eq:Poincare-Neumann} is that the spectral gap $C_\Omega^{-1}$ decreases when the size of $\Omega$ increases. For example,  if $\Omega$ is a box of side length $L > 0$, then $C_\Omega^{-1} = \pi^2 L^{-2}$. 
 
 In the present paper, we are interested in a variant of \eqref{eq:Poincare-Neumann}, where the constant $C_\Omega$ is improved by incorporating additional information about $u$ on subdomains of $\Omega$. More precisely, assuming that the domain $\Omega$ is covered by a finite collection of disjoint subdomains $\{\Omega_j\}_{j \in J}$ with finite $J$, we aim to establish an operator inequality of the form
\begin{align}\label{eq:Poincare-Neumann-loc}
Q_\Omega \le \eps (- \Delta^{\Neu}_\Omega) + C_{\eps} \sum_{j\in J} Q_{\Omega_j} 
\end{align}
where the constant $\eps>0$ is significantly smaller than $C_\Omega$, with $C_{\eps}>0$ chosen suitably. 

The key idea behind \eqref{eq:Poincare-Neumann-loc} is that if a function is locally close to a constant, meaning $Q_{\Omega_j}$ is small for all $j$, then it must also be globally close to a constant, unless it has a very large kinetic energy (i.e. $-\Delta^{\Neu}_\Omega$ is large). While the underlying idea is transparent, constructing a quantitative version of \eqref{eq:Poincare-Neumann-loc} is nontrivial. For simplicity, we will restrict our analysis to the case where both $\Omega$ and its subdomains $\Omega_j$ are boxes in $\R^d$.

Our derivation of \eqref{eq:Poincare-Neumann-loc} is motivated by the study of Bose--Einstein condensation (BEC) in weakly interacting Bose gases. In this context, condensation is often derived using the spectral gap of the kinetic energy operator, which enables a comparison between the interacting system and a non-interacting one (possibly with a modified kinetic operator). Although proving BEC in the thermodynamic limit remains a major open problem in mathematical physics due to the absence of a spectral gap in the large-volume limit, it is nevertheless meaningful to study the problem in a finite-volume setting.

The significance of \eqref{eq:Poincare-Neumann-loc} lies in its ability to leverage condensation in smaller subdomains $\Omega_j$, where the proof is easier thanks to the existence of a stronger spectral gap, to deduce condensation in the larger domain $\Omega$ by a bootstrap argument. While our analysis does not cover the thermodynamic limit, it is sufficiently robust to treat simplified models that have attracted significant attention in recent decades, including the Gross--Pitaevskii scaling regime \cite{lieb2002proof, lieb2006derivation, nam2016ground, boccato2018complete, boccato2019bogoliubov, boccato2020optimal, nam2022optimal}, and slightly more singular regimes \cite{adhikari2021bose, fournais2020length, brennecke2024short,fournais2024lower}.  

The usefulness of a Poincaré inequality in the context of BEC is well known; see, for instance, the fundamental work \cite{lieb2003poincare}. However, the Poincaré inequality derived in our paper is simpler than existing ones, while remaining effective for analyzing the Bose gas. We hope this simplified approach will stimulate further research on localization arguments in BEC.

In the next section, we present a concrete formulation of \eqref{eq:Poincare-Neumann-loc} and provide a detailed discussion of its application to the derivation of BEC in interacting Bose gases. 

\subsection*{Acknowledgments} We would like to thank Zhenfu Wang for many helpful discussions. J. Chong thanks Søren Fournais for discussions on the localization methods used in his work, their connection to the IMS inequality, and their history. H. Liang thanks Zhibin Gong for insightful discussions. P. T. Nam thanks Christian Brennecke for insightful remarks comparing various proofs of Bose–Einstein condensation.    This work was partially supported by the National Key R\&D Program of China (Project No. 2024YFA1015500, J. Chong and H. Liang) and by the European Research Council through the ERC Consolidator Grant RAMBAS (Project No. 10104424, P.T. Nam).

\section{Main Results}

\subsection{Generalized Poincar\'e inequality for cubes} Our first result is a Poincare-type inequality for cubes and subcubes. To fix the notation, divide the unit box $\Lambda=[-1/2,1/2]^{d}\subset\R^{d}$ into $M^{d}$ close subcubes $\{\Lambda_i\}$ (with overlapping boundaries), each having side length $\ell=1/M$. 

\begin{theorem}[Poincare inequality for cubes] \label{thm:Poincare} 
    Let $d\in \mathbb{N}$ and $p>1$. There exist a constant $C_{p,d}>0$, depending only on the dimension $d$ and $p$, such that 
    \begin{equation}\label{ineq:Poincare}
        \lt\|f-\langle f\rangle_{\Lambda}\rt\|_{L^p(\Lambda)}^p
        \leq C_{p,d}  \norm{\nabla f }_{L^{p}(\Lambda)}\left( \frac{1}{\ell^{p}}\sum_{i=1}^{M^{d}}\norm{f-\left\langle f \right\rangle_{\Lambda_i} }_{L^{p}(\Lambda_{i})}^{p} \right)^{1-\frac{1}{p}}
	\end{equation}
	holds for all $f\in W^{1,p}(\Lambda)$ and $M\in\N$. Consequently,  for every $\eps>0$ we have the Poincar\'e inequality 
    \begin{equation}\label{eq:Poincare-Q-Qi}
        Q_\Lambda \leq \eps (-\Delta_\Lambda) + \frac{C^{2}_{2,d}}{4\eps \ell^{{2}}} \sum_{i=1}^{M^{d}}Q_{\Lambda_i}\,,
    \end{equation}
in the sense of quadratic forms on $L^2(\Lambda)$. Here, $\Delta_{\Lambda}$ is the Neumann Laplacian operator.
\end{theorem}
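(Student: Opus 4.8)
\emph{Proof plan.} The plan is to split $f-\langle f\rangle_\Lambda$ into an \emph{intra-cube} part and an \emph{inter-cube} part and control each. Writing $a_i:=\langle f\rangle_{\Lambda_i}$ and $\sigma:=\sum_i\norm{f-a_i}_{L^p(\Lambda_i)}^p$, and decomposing $f-\langle f\rangle_\Lambda=(f-a_i)+(a_i-\langle f\rangle_\Lambda)$ on each $\Lambda_i$, one gets
\[
\norm{f-\langle f\rangle_\Lambda}_{L^p(\Lambda)}^p\le 2^{p-1}\Big(\sigma+\ell^{d}\sum_{i}\abs{a_i-\langle f\rangle_\Lambda}^p\Big).
\]
The intra-cube term is easy: the $L^p$-Poincar\'e inequality on a cube of side $\ell$ gives $\sigma\le(C_{p,d}\ell)^p\norm{\nabla f}_{L^p(\Lambda)}^p$, and since $\ell\le M^{p-1}$ this already yields $\sigma\le C_{p,d}M^{p-1}\norm{\nabla f}_{L^p(\Lambda)}\sigma^{1-1/p}$. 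For the inter-cube term I would apply the discrete $\ell^p$-Poincar\'e inequality on the grid $\{1,\dots,M\}^d$ (with constant $\le C_{p,d}M^p$) to reduce everything to estimating $\sum_{i\sim j}\abs{a_i-a_j}^p$ over nearest-neighbour pairs of cubes. The subtlety is that the crude bound $\abs{a_i-a_j}\lesssim\norm{\nabla f}_{L^p}$, obtained by Poincar\'e on the double cube $\Lambda_i\cup\Lambda_j$, only reproduces the classical inequality; the improvement must come from an estimate of $\abs{a_i-a_j}$ that also detects the smallness of $\sigma$.

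The key new ingredient I would establish is a one-dimensional inequality comparing a boundary value with the mean: if $g\in W^{1,p}(0,L)$ satisfies $\int_0^L g=0$, then
\[
\abs{g(L)}^p\le C_p\,\norm{g}_{L^p(0,L)}^{p-1}\,\norm{g'}_{L^p(0,L)},
\]
with $C_p$ independent of $L$. I would prove this by integrating the identity $\tfrac{d}{dx}\abs{g}^p=p\abs{g}^{p-2}gg'$ against the weight $x$, which gives $L\abs{g(L)}^p=\norm{g}_{L^p}^p+p\int_0^L\abs{g}^{p-2}gg'\,x\,dx$, then bounding the last integral by $pL\norm{g}_{L^p}^{p-1}\norm{g'}_{L^p}$ via H\"older and absorbing $\norm{g}_{L^p}^p\le C_pL\norm{g'}_{L^p}\norm{g}_{L^p}^{p-1}$ through one-dimensional Poincar\'e. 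What matters is the multiplicative structure $\norm{g}^{p-1}\norm{g'}$ rather than $\norm{g'}^p$. Via Fubini and H\"older in the transversal variables, this upgrades — for any face $F$ of $\Lambda_i$ and $v_i^F:=\langle f\rangle_F$ — to
\[
\abs{a_i-v_i^F}^p\le C_{p,d}\,\ell^{1-d}\,\norm{f-a_i}_{L^p(\Lambda_i)}^{p-1}\,\norm{\nabla f}_{L^p(\Lambda_i)},
\]
because $a_i-v_i^F$ is the transversal average of the slicewise quantity $\langle f(x',\cdot)\rangle-f(x',\text{endpoint})$, to which the one-dimensional inequality applies line by line (replacing the slicewise means by $a_i$ costs only a $p$-dependent factor).

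With this in hand I would, for adjacent cubes $\Lambda_i,\Lambda_j$ sharing a face $F$, write $a_i-a_j=(a_i-v^F)+(v^F-a_j)$ with $v^F:=\langle f\rangle_F$, so that $\abs{a_i-a_j}^p\le 2^{p-1}(\abs{a_i-v^F}^p+\abs{a_j-v^F}^p)$, and insert the $d$-dimensional estimate into each term. Summing over all nearest-neighbour pairs (each cube occurs in at most $2d$ of them) and applying H\"older with exponents $\tfrac{p}{p-1}$ and $p$ gives
\[
\sum_{i\sim j}\abs{a_i-a_j}^p\le C_{p,d}\,\ell^{1-d}\,\sigma^{1-1/p}\,\norm{\nabla f}_{L^p(\Lambda)}.
\]
Feeding this back through the discrete Poincar\'e inequality, combining with the intra-cube bound above, and using the identity $M^{p-d}\ell^{1-d}=M^{p-1}$, one arrives at $\norm{f-\langle f\rangle_\Lambda}_{L^p(\Lambda)}^p\le C_{p,d}\norm{\nabla f}_{L^p(\Lambda)}(\ell^{-p}\sigma)^{1-1/p}$, which is \eqref{ineq:Poincare}.

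Finally, for \eqref{eq:Poincare-Q-Qi} I would specialise to $p=2$. Since $\langle f,Q_Af\rangle=\norm{f-\langle f\rangle_A}_{L^2(A)}^2$ for $A\subset\Lambda$ and $\langle f,(-\Delta_\Lambda)f\rangle=\norm{\nabla f}_{L^2(\Lambda)}^2$, inequality \eqref{ineq:Poincare} with $p=2$ reads $\langle f,Q_\Lambda f\rangle\le\tfrac{C_{2,d}}{\ell}\langle f,(-\Delta_\Lambda)f\rangle^{1/2}\big\langle f,\sum_i Q_{\Lambda_i}f\big\rangle^{1/2}$, and Young's inequality $\sqrt{xy}\le\eps x+\tfrac{1}{4\eps}y$ then gives $Q_\Lambda\le\eps(-\Delta_\Lambda)+\tfrac{C_{2,d}^2}{4\eps\ell^2}\sum_i Q_{\Lambda_i}$ as quadratic forms on $L^2(\Lambda)$, for every $\eps>0$, which even slightly improves the stated bound. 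I expect the hard part of the whole argument to be isolating and proving the one-dimensional boundary-value/mean inequality with exactly this exponent pattern; once it is available, the remaining steps are a bookkeeping of local and discrete Poincar\'e inequalities together with H\"older's inequality.
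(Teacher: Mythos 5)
Your proposal is correct and follows essentially the same route as the paper: the one-dimensional boundary-value/mean inequality $|g(L)|^p\le C_p\|g\|_{L^p}^{p-1}\|g'\|_{L^p}$ applied to transversal averages is precisely the Gagliardo--Nirenberg step the paper uses to compare cube averages with face averages, and the subsequent H\"older summation over nearest-neighbour pairs, the discrete Poincar\'e inequality on the grid $\{1,\dots,M\}^d$ with constant $\sim M^p$, and Young's inequality for the operator form all match the paper's Steps 1--3. Your handling of the intra-cube term for general $p$ and your direct derivation of the operator inequality from \eqref{ineq:Poincare} (valid for all $\eps>0$ with the slightly better constant $C_{2,d}^2/(4\eps\ell^2)$) are minor refinements of the same argument.
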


\begin{remark}
    The result also holds in the periodic setting. 
\end{remark}

\begin{remark}
    The bound \eqref{ineq:Poincare} is optimal for all dimension $d\ge 1$. For $M=2N$, we may define 
    \begin{equation}
        f_{N}(x):=\sum_{j=-N}^{N-1}\left[ \psi_{N}\left( x_{1}+\frac{j}{2N} \right)-\frac{j}{2N}\chi_{N}\left( x_{1}+\frac{j}{2N} \right)   \right], 
    \end{equation}
    where $\chi_{N}=\1_{[-1/2N,0]}$ is the characteristic function of interval $[-1/2N,0]$ and 
    \begin{equation*}
            \psi_{N}(x_{1})=2N\max\left\{x_{1}+\frac{1}{4N^{2}},0\right\}\chi_{N}(x_{1})\,.
    \end{equation*}
    The function $f_N(x)$ exhibits a staircase structure along the $x_1$-coordinate while remaining constant in all orthogonal directions. Constructed through successive translations of a basic building block $\psi_{N}$, the function alternates between horizontal plateaus and connecting segments with uniform slope $2N$. Each plateau spans an interval of length $1/2N$, while the connecting segments ensure continuity between adjacent levels. Inserting $f_{N}$, then we find that both side of \eqref{ineq:Poincare} have the same order of $N$. 
\end{remark}

\begin{remark}
    We note that the Poincar\'e inequality \eqref{ineq:Poincare} is closely related, albeit in completely different settings and form, to the multiscale Poincar\'e inequality, which is used in quantitative homogenization theory \cite{armstrong2016mesoscopic}. Similar estimates have also been developed to establish quantitative homogenization for interacting particle systems \cite{funaki2024quantitative}.
\end{remark}

\subsection{Bose--Einstein condensation of dilute Bose gases}  Let $\kappa\in[0,2/3)$ and fix $R \in (0, 1)$. Suppose $V$ is nonnegative, radially symmetric, and compactly supported on $\{ |x|<R\}$. We consider a system of $N$ bosons in $\Lambda:=[-1/2,1/2]^{3}$ interacting via a two-body potential of the form $N^{2-2\kappa}V(N^{1-\kappa}(x-y))$. The system is described by the Hamiltonian
\begin{equation}
	H_{N,\kappa}=\sum_{i=1}^{N}\left(-\Delta_{\Lambda}\right)_{x_{i}}+\sum_{1\leq i<j\leq N}N^{2-2\kappa}V(N^{1-\kappa}(x_{i}-x_{j}))
\end{equation}
acting on the bosonic space $L_{s}^{2}(\Lambda^{N})$, the Hilbert space consisting of functions in $L^{2}(\Lambda^{N})$ that are invariant with respect to all permutations of the $N$ particle labels $x_i \in \R^3$. Here, $-\Delta_{\Lambda}$ is the Laplacian with either the Neumann or periodic boundary condition. 

Note that the Hamiltonian $H_{N,\kappa}$ is unitary equivalent to $N^{2-2\kappa}$ times the Hamiltonian of $N$ bosons in a box $[-L/2,L/2]^3$ of side length $L=N^{1-\kappa}$, interacting through the unscaled potential $V$. In this sense, the Hamiltonian describes a system of very diluted Bose gas since, in the latter setting, the density of the system, $\rho:=N/L^3$, is equal to $N^{3\kappa-2}\ll 1$ for $\kappa\in [0,2/3)$. The case $\kappa=0$ corresponds to the Gross--Pitaevskii (GP) regime, and the case $\kappa=2/3$ corresponds to the usual thermodynamic limit. Hence, the parameter $\kappa$ allows us to interpolate between the GP regime and the thermodynamic regime. 

The study of low-energy properties, such as ground state energy $E_{N, \kappa}$ and its condensation phenomenon, of
dilute Bose gases described by the Hamiltonian $H_{N, \kappa}$ has a long history
in both the physics and mathematics literature. It can be shown rigorously that 
\begin{align}\label{eq:gs_energy}
    E_{N, \kappa} =4\pi\a_0 N^{\kappa+1} +4\pi\cdot\frac{128}{15\sqrt{\pi}}\,\a_0^{5/3}N^{5\kappa/2} +o\left(N^{5 \kappa/2}\right)_{N\rightarrow \infty}
\end{align}
for $\kappa>0$.
Here, $\a_0>0$ is the scattering length of $V$ defined by 
\begin{equation}
	8\pi \a_{0} :=\int_{\R^{3}}V(1-\omega)\, ,
\end{equation}
where $\omega$ is the unique radial solution to the zero-energy scattering equation  
\begin{equation}\label{eq:scattering-omega}
	-\Delta \omega=\tfrac{1}{2}V(1-\omega)\ {\rm in}\ \R^{3},\quad \lim\limits_{|x|\to\infty}\omega(x)=0\, 
\end{equation}
satisfying $0\le \omega \le 1$. The leading order term in the thermodynamic limit $\kappa =2/3$ was proved in \cite{dyson1957ground} for the upper bound and in \cite{lieb1998ground} for the lower bound. In the same setting, the next order correction term, also known as the Lee--Huang--Yang term \cite{lee1957eigenvalues}, was proved in \cite{yau2009second} (not including the hardcore interaction) for the upper bound and in \cite{fournais2020length, fournais2023energy} for the lower bound. Similar results were also recently obtained in the positive temperature setting for the free energy \cite{haberberger2023free,haberberger2024upper,fournais2024free}. Furthermore, in the GP case, \eqref{eq:gs_energy} must be adjusted by adding a finite-volume correction term that is of order one in the limit $N \to \infty$ (see \cite[Theorem 1.1]{boccato2019bogoliubov}).

We say that a state $\Psi \in L_{s}^{2}(\Lambda^{N})$ exhibits Bose--Einstein condensation (BEC) on the constant function $u_0= 1\in L^2(\Lambda)$ if  
\begin{equation}\label{eq:BEC}
	\left\langle \Psi, \cN_{+}\Psi \right\rangle=o(N)_{N\to \infty}\, , 
\end{equation}
where 
\begin{equation}
\cN_{+}:=\sum_{i=1}^{N}(Q_\Lambda)_{x_{i}}
\end{equation}
is the number operator associated with the particles excited out of the condensate.

\begin{theorem}[Bose--Einstein condensation]\label{thm:BEC} Assume $0\le V\in L^1(\R^3)$ is radially symmetric, decreasing, and with compact support. Let  $\kappa\in (0,2/11)$ and let $\Psi_{N}\in L^{2}_{s}(\Lambda^{N})$ be a normalized state such that 
	\begin{equation}
		\left\langle \Psi_{N}, H_{N,\kappa}\Psi_{N} \right\rangle\leq 4\pi\a_{0}N^{1+\kappa}+C_{0}N^{\frac{5\kappa}{2}+\frac{2-3\kappa}{4}}\,,
	\end{equation}
	with some constant $C_{0}>0$ independent of $N$. 	Then 
	\begin{equation}\label{eq:complete BEC}
	\lim_{N\to \infty}	\frac{\left\langle \Psi_{N}, \cN_{+}\Psi_{N} \right\rangle}{N} =0\,. 
	\end{equation}
\end{theorem}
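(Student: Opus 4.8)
The plan is to transfer condensation from small subcubes of $\Lambda$ — where, after rescaling, the system lies in (or below) the Gross--Pitaevskii regime and hence enjoys an $O(1)$ spectral gap and a comparatively elementary analysis — up to $\Lambda$ itself, using the operator inequality \eqref{eq:Poincare-Q-Qi} as the bridge. Write $V_N(x):=N^{2-2\kappa}V(N^{1-\kappa}x)$ and $\mathcal{T}_N:=\sum_{k=1}^N(-\Delta_\Lambda)_{x_k}$ for the total kinetic energy, and partition $\Lambda$ into $M^3$ closed subcubes $\{\Lambda_i\}$ of side $\ell=1/M$; the useful choice turns out to be $\ell\sim N^{-\kappa/2}$, so that each rescaled subcube carries $\sim N\ell^3$ particles interacting through a Gross--Pitaevskii-scaled potential. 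Applying \eqref{eq:Poincare-Q-Qi} in each of the $N$ variables and summing yields, on $L^2_s(\Lambda^N)$,
\[
\cN_+\ \le\ \eps\,\mathcal{T}_N+\frac{C_{2,3}^2}{2\eps\,\ell^2}\sum_{i=1}^{M^3}\cN_+^{(i)},\qquad \cN_+^{(i)}:=\sum_{k=1}^N(Q_{\Lambda_i})_{x_k},
\]
with $\eps>0$ free since $\ell^{-2}\to\infty$. Since $V\ge0$ forces $\mathcal{T}_N\le H_{N,\kappa}$, the energy hypothesis gives $\langle\Psi_N,\mathcal{T}_N\Psi_N\rangle\le4\pi\a_0N^{1+\kappa}+C_0N^{5\kappa/2+(2-3\kappa)/4}=O(N^{1+\kappa})$, so it remains to estimate $\sum_i\langle\Psi_N,\cN_+^{(i)}\Psi_N\rangle$ and then optimize over $\eps$.

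Next I would localize the energy. Let $H^{(i)}$ be the Hamiltonian of the particles lying in $\Lambda_i$, with Neumann conditions on $\partial\Lambda_i$ and only the intra-cube interactions retained. Because the Neumann Dirichlet form is additive over the partition — $\int_\Lambda|\nabla u|^2=\sum_i\int_{\Lambda_i}|\nabla u|^2$ for $u\in H^1(\Lambda)$ — and $V\ge0$, one has $\sum_i\langle\Psi_N,H^{(i)}\Psi_N\rangle\le\langle\Psi_N,H_{N,\kappa}\Psi_N\rangle$; this is the point where the Neumann boundary condition is essential. On the sector with $n_i$ particles in $\Lambda_i$ (so $\sum_in_i=N$), a quantitative Gross--Pitaevskii lower bound of Lieb--Yngvason type gives $H^{(i)}\ge4\pi\a_0N^{\kappa-1}\ell^{-3}n_i^2\,(1-o(1))$, and convexity of $t\mapsto t^2$ gives $\ell^{-3}\sum_in_i^2=N^2+\ell^{-3}\sum_i(n_i-N\ell^3)^2$. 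Comparing with the upper bound above then pins down both the particle-number fluctuations — $\ell^{-3}\sum_i(n_i-N\ell^3)^2\lesssim N^{5\kappa/2+(2-3\kappa)/4}$, so $n_i=N\ell^3(1+o(1))$ for all but an $o(M^3)$ collection of cubes, which therefore genuinely sit in the GP regime — and the total local energy excess $\sum_i\big(\langle H^{(i)}\rangle-E^{(i)}(n_i)\big)\lesssim N^{5\kappa/2+(2-3\kappa)/4}$. No bootstrap is needed for this, although a bootstrapped a priori bound $\langle\cN_+\rangle\le\delta N$ could be substituted.

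On a good subcube, the rescaled Hamiltonian $\ell^2H^{(i)}$ is of Gross--Pitaevskii type with a unit-order Neumann gap, and complete condensation in that regime — known, or re-derived by a short Bogoliubov argument (or a Lieb--Seiringer-type bootstrap) exploiting the large gap — gives $\langle\cN_+^{(i)}\rangle\le C+C\ell^2\big(\langle H^{(i)}\rangle-E^{(i)}(n_i)\big)$; the $o(M^3)$ exceptional cubes are absorbed via $\cN_+^{(i)}\le\pi^{-2}\ell^2(-\Delta^{\Neu}_{\Lambda_i})$ and the kinetic bound. Summing and using the energy-excess estimate with $\ell^2\sim N^{-\kappa}$ gives $\sum_i\langle\Psi_N,\cN_+^{(i)}\Psi_N\rangle\lesssim M^3+N^{-\kappa}N^{5\kappa/2+(2-3\kappa)/4}\lesssim N^{3\kappa/4+1/2}$. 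Feeding this back into the first display, $\langle\Psi_N,\cN_+\Psi_N\rangle\lesssim\eps N^{1+\kappa}+\eps^{-1}N^{7\kappa/4+1/2}$, and optimizing $\eps\sim N^{3\kappa/8-1/4}$ yields $\langle\Psi_N,\cN_+\Psi_N\rangle\lesssim N^{3/4+11\kappa/8}$, which is $o(N)$ precisely for $\kappa<2/11$.

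The main obstacle is making the energy localization quantitative and error-tight: one must control the particle-number fluctuations $n_i$ from the global energy bound alone, show that the remainder in the local Gross--Pitaevskii lower bound (of order $(\rho a^3)^\eta=N^{-(2-3\kappa)\eta}$) is small enough — a sufficiently good exponent $\eta$, e.g.\ that of the Fournais--Solovej bound, is needed — and handle the interactions straddling subcube faces, a layer of relative volume $\sim N^{\kappa-1}/\ell$. It is the interplay of these errors with the gain $\ell^{-2}\sim N^{\kappa}$ from the local spectral gap and the loss $\eps^{-1}$ from the Poincar\'e step that forces the window $\kappa\in(0,2/11)$, rather than the full range $(0,2/3)$ that a fully iterated multiscale version of the argument might aim for.
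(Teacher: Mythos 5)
Your overall strategy coincides with the paper's: use the Poincar\'e inequality \eqref{eq:Poincare-Q-Qi} to dominate the global $\cN_+$ by a small multiple of the kinetic energy plus the local excitation numbers $\cN_+^{(i)}$, localize the energy by Neumann bracketing and positivity of $V$, feed in a quantitative GP-type lower bound with an excitation gap on each subcube of side $\ell\sim N^{-\kappa/2}$, and balance the errors; your arithmetic correctly reproduces the threshold $\kappa<2/11$. The genuine difference is in how the local particle numbers are handled. You work canonically: decompose into occupation sectors $(n_1,\dots,n_{M^3})$, control $\sum_i(n_i-N\ell^3)^2$ by convexity against the energy upper bound, and split into good and bad cubes. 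The paper instead works grand-canonically: it subtracts a chemical potential $8\pi\a\rho_\mu\cN$ with $\rho_\mu=N$, proves a lower bound on each localized Hamiltonian $\cH_{\rho_\mu}(\Lambda_i)$ that is \emph{uniform in the local particle number} (Lemma~\ref{lemmasmallenergy}, via superadditivity over groups of $\sim\rho_\mu\ell^3$ particles and the completed square $\tfrac{\a}{\ell^3}(\rho_\mu\ell^3-n)^2\ge0$), and thereby never needs a sector decomposition or a good/bad dichotomy. A second bookkeeping difference: the paper extracts $\ell^{2+\alpha}\cN_+$ directly inside the localized kinetic operator (Lemma~\ref{kineticloc}), whereas you keep the full kinetic energy and optimize over $\eps$ at the end; both choices are admissible within the constraint $\eps\lesssim\ell^{-2}$ and give the same threshold.

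One step of your sketch does not work as literally stated: absorbing the exceptional cubes via $\cN_+^{(i)}\le\pi^{-2}\ell^2(-\Delta^{\Neu}_{\Lambda_i})$ and the global kinetic bound only yields $\sum_{i\,\mathrm{bad}}\langle\cN_+^{(i)}\rangle\lesssim\ell^2\langle\mathcal{T}_N\rangle\sim N$, since nothing prevents the kinetic energy from concentrating in the bad cubes; this is useless against your target $\sum_i\langle\cN_+^{(i)}\rangle\lesssim N^{3\kappa/4+1/2}$. The correct absorption goes through the trivial bound $\cN_+^{(i)}\le n_i$ together with the variance estimate (writing $n_i\1_{\mathrm{bad}}\lesssim\delta^{-2}(N\ell^3)^{-1}(n_i-N\ell^3)^2$), which does close the argument at the same order; alternatively, adopting the paper's chemical-potential device removes the issue entirely. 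Also note a small slip: dividing the energy excess by $4\pi\a=4\pi\a_0N^{\kappa-1}$ gives $\ell^{-3}\sum_i\langle(n_i-N\ell^3)^2\rangle\lesssim N^{1-\kappa}\cdot N^{5\kappa/2+(2-3\kappa)/4}$, not the bound you wrote; this does not affect your final exponents because the final computation uses the energy excess rather than the fluctuation bound directly. Finally, the local input $\langle\cN_+^{(i)}\rangle\le C+C\ell^2(\langle H^{(i)}\rangle-E^{(i)}(n_i))$ for a Neumann box in the GP regime is exactly what the paper proves in Proposition~\ref{prop:Neumann completing the square} (under a smallness condition on $\a_0n/\ell$); citing \cite{boccato2023bose} instead is legitimate and is precisely what yields the improved Theorem~\ref{thm:improved BEC}.
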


The LHY formula suggests that, for the ground state, the “optimal” second-order term may scale as $N^{5\kappa/2}$. Here, however, we instead assume the condition $N^{5\kappa/2+(2-3\kappa)/4}$, which equals $\sqrt{N} \gg 1$ when $\kappa=0$. This specific constraint, $N^{5\kappa/2+(2-3\kappa)/4}$, arises from the error estimate in Proposition~\ref{prop:Neumann completing the square} with $S(\cdot)=\sqrt{\cdot}$. By treating the “completing-the-square” step more carefully (and taking into account the cubic terms), one may expect to improve Proposition~\ref{prop:Neumann completing the square} and choose $S(\cdot)=\log(\cdot)$, as already done in \cite{boccato2023bose}, using a more involved method based on unitary transformations. In that case, complete BEC can be obtained for $\kappa\in(0,\frac{2}{7}-\varepsilon)$, provided that the stricter upper bound $N^{5\kappa/2+\varepsilon}$ holds for the second-order term, with $\eps>0$ small arbitarily. That is, applying the result of \cite[Theorem~1.1]{boccato2023bose} in our proof of Theorem~\ref{thm:BEC} yields the following result.
\begin{theorem}\label{thm:improved BEC}
    Let $0\le V\in L^1\cap L^\infty(\R^3)$ be a radially symmetric function with compact support. Let  $\kappa\in (0,\frac27-\varepsilon)$, for any fixed $\varepsilon>0$, and let $\Psi_{N}\in L^{2}_{s}(\Lambda^{N})$ be a normalized state such that 
	\begin{equation}
		\left\langle \Psi_{N}, H_{N,\kappa}\Psi_{N} \right\rangle\leq 4\pi\a_{0}N^{1+\kappa}+C_{0}N^{\frac{5\kappa}{2}+\varepsilon}\,,
	\end{equation}
    with some constant $C_{0}>0$ independent of $N$. Then we obtain BEC \eqref{eq:complete BEC}. 
\end{theorem}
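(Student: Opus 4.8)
The plan is to follow the proof of Theorem~\ref{thm:BEC} line by line, changing only one ingredient: the source of condensation on the subcubes. Recall its architecture. One covers the unit cube $\Lambda$ by $M^{3}$ subcubes $\Lambda_{j}$ of side $\ell=1/M=N^{-\beta}$, with $\beta>0$ to be optimized, and applies the $N$-body form of \eqref{eq:Poincare-Q-Qi},
\begin{equation*}
\cN_{+}\;\le\;\eps\sum_{i=1}^{N}(-\Delta_{\Lambda})_{x_{i}}\;+\;\frac{C_{2,3}^{2}}{2\eps\,\ell^{2}}\sum_{j=1}^{M^{3}}\cN_{+}^{(j)},\qquad\cN_{+}^{(j)}:=\sum_{i=1}^{N}(Q_{\Lambda_{j}})_{x_{i}}.
\end{equation*}
Tested against $\Psi_{N}$, this reduces controlling the global excitation number $\langle\Psi_{N},\cN_{+}\Psi_{N}\rangle$ to the a priori total kinetic energy (which is of order $N^{1+\kappa}$ by the energy hypothesis, since $V\ge0$) times $\eps$, plus $(2\eps\ell^{2})^{-1}C_{2,3}^{2}$ times $\sum_{j}\langle\Psi_{N},\cN_{+}^{(j)}\Psi_{N}\rangle$. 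The last sum is bounded using condensation inside the subcubes: after the unitary rescaling, each $\Lambda_{j}$ carries $N_{j}\approx N\ell^{3}=N^{1-3\beta}$ particles in a box of physical side $N^{1-\kappa}\ell$, which is again a Hamiltonian of the family $H_{N_{j},\kappa_{j}}$ with effective parameter $\kappa_{j}=\frac{\kappa-2\beta}{1-3\beta}\in(0,\kappa)$ for $0<\beta<\kappa/2$, i.e.\ a strictly ``more Gross--Pitaevskii-like'' regime; a bootstrap then feeds the improved global bound back as the a priori input, and \eqref{eq:Poincare-Q-Qi} converts the final $o(N)$ bound on $\sum_{j}\cN_{+}^{(j)}$ back to $\Lambda$.

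The one modification is the source of condensation on $\Lambda_{j}$. In the proof of Theorem~\ref{thm:BEC} it is Proposition~\ref{prop:Neumann completing the square} with $S(\cdot)=\sqrt{\cdot}$, whose error term forces $\kappa_{j}$ to be small and requires the local energy to lie within $N_{j}^{5\kappa_{j}/2+(2-3\kappa_{j})/4}$ of the local ground-state energy; after optimizing $\beta$ this is exactly what yields the constraints $\kappa<2/11$ and $\langle H_{N,\kappa}\rangle\le4\pi\a_{0}N^{1+\kappa}+C_{0}N^{5\kappa/2+(2-3\kappa)/4}$ of Theorem~\ref{thm:BEC}. We would instead invoke \cite[Theorem~1.1]{boccato2023bose} --- the ``$S(\cdot)=\log(\cdot)$'' improvement, obtained there by a more elaborate scheme of unitary transformations that also retains the cubic terms --- which establishes complete BEC for $H_{N_{j},\kappa_{j}}$ for every $\kappa_{j}\in(0,\tfrac27-\eps)$ under the sharper local energy bound $N_{j}^{5\kappa_{j}/2+\eps}$. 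Re-optimizing, one may take $\beta$ an arbitrarily small positive constant, so that $\kappa_{j}$ lies as close below $\kappa$ as desired: the admissible range opens up to $\kappa\in(0,\tfrac27-\eps)$ and the required second-order surplus relaxes to $C_{0}N^{5\kappa/2+\eps}$, which are precisely the hypotheses of Theorem~\ref{thm:improved BEC}. The additional assumption $V\in L^{\infty}$ is the one imposed in \cite{boccato2023bose}.

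The main work lies in \emph{applying} \cite[Theorem~1.1]{boccato2023bose} inside each subcube, just as Proposition~\ref{prop:Neumann completing the square} is applied in the proof of Theorem~\ref{thm:BEC}. One must (a) distribute the global energy surplus $C_{0}N^{5\kappa/2+\eps}$ over the $\sim N^{3\beta}$ subcubes so that, for all but $O(N^{3\beta-\delta})$ of them (some $\delta>0$), the local surplus is $O(N_{j}^{5\kappa_{j}/2+\eps'})$ with $\eps'$ slightly larger than $\eps$; the $O(N^{3\beta-\delta})$ exceptional subcubes are handled by the trivial bound $\cN_{+}^{(j)}\le N_{j}$, whose total contribution is $\le CN^{1-\delta}$ and is absorbed through the $\eps$-term of \eqref{eq:Poincare-Q-Qi}. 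This distribution requires controlling the interaction energy across subcube boundaries, using $\supp V\subset\{|x|<R\}$ together with the fact that the physical interaction range $N^{\kappa-1}$ is much smaller than the subcube side $N^{1-\kappa-\beta}$, and controlling the IMS-type cost of replacing the quadratic form of $-\Delta_{\Lambda}^{\Neu}$ by $\sum_{j}(-\Delta_{\Lambda_{j}}^{\Neu})$. One must also (b) localize the fluctuating particle number per subcube --- $\Psi_{N}$ is canonical, not grand-canonical, in $\Lambda_{j}$ --- to bring the local state into the fixed-particle-number form to which \cite[Theorem~1.1]{boccato2023bose} applies, and (c) check that the bootstrap on the $\cN_{+}$-exponent still closes and terminates under the relaxed inputs, the balance being between $\eps N^{1+\kappa}$ and $(\eps\ell^{2})^{-1}$ times the summed local excitation numbers. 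None of this is conceptually new relative to Theorem~\ref{thm:BEC}: only the numerology changes, thanks to the sharper local estimate.
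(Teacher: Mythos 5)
There is a genuine gap, and it lies in how you use \cite[Theorem~1.1]{boccato2023bose} and, consequently, in how you choose the localization scale. That theorem is a Gross--Pitaevskii ($\kappa=0$) result for the Bose gas in a Neumann box; in this paper it serves as a drop-in replacement for Proposition~\ref{prop:Neumann completing the square}, i.e.\ an \emph{operator} lower bound of the form $H^{\Neu}_{n,\ell}\ge 4\pi\a_0 n^2/\ell+(\text{gap})\,\cN_+-(\text{errors})$ with the error function $S(\cdot)=\sqrt{\cdot}$ improved to $S(\cdot)=\log(\cdot)$. It does \emph{not} assert BEC for approximate ground states of $H_{N_j,\kappa_j}$ for every $\kappa_j\in(0,\tfrac27-\eps)$, which is what your argument invokes on each subcube. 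Your plan to take $\beta$ arbitrarily small, so that the effective exponent $\kappa_j=\frac{\kappa-2\beta}{1-3\beta}$ sits just below $\kappa$, therefore begs the question: the subcube systems are then in essentially the same beyond-GP regime one is trying to treat, and the cited input does not cover them. The correct (and forced) choice is $\beta=\kappa/2$, i.e.\ $\ell\sim N^{-\kappa/2}$, which makes $\kappa_j=0$ exactly and reduces each subcube to the GP regime where the Neumann-box lower bound applies; the threshold $\tfrac27$ then emerges from the final numerology (the exponent $\tfrac{7\kappa}{2}-1+\eps$ in the analogue of \eqref{esitimate}, once the $N^{(2-3\kappa)/4}$ factor is replaced by a logarithm), not from the range of validity of the cited theorem.

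A second, related issue: the architecture you describe --- restricting $\Psi_N$ to subcubes, distributing the energy surplus, discarding exceptional subcubes, localizing the particle number so as to apply a fixed-$N$ BEC statement, and iterating a bootstrap on the $\cN_+$-exponent --- is not how the proof of Theorem~\ref{thm:BEC} runs, and the steps you list under (a), (b), (c) as ``the main work'' are precisely the delicate points that the paper's framework is built to avoid. The actual argument never touches the state until the very end: one subtracts $\ell^{2+\alpha}\cN_+$ from $H_{N,\kappa}$, localizes the kinetic energy via Lemma~\ref{kineticloc}, passes to grand-canonical subcube Hamiltonians $\cH_{\rho_\mu}(\Lambda_i)$ with chemical potential $8\pi\a\rho_\mu$, splits the particles into groups of size $O(\rho_\mu\ell^3)$ by sub-additivity, and applies the GP-regime operator lower bound to each group; completing the square in $n$ handles the fluctuating local particle number automatically. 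The outcome is a single operator inequality $H_{N,\kappa}-\ell^{2+\alpha}\cN_+\ge 4\pi\a_0N^{1+\kappa}-(\text{errors})$, tested against $\Psi_N$ once --- no bootstrap, no energy distribution among subcubes, no exceptional-cube analysis. To repair your proposal, replace items (a)--(c) by this operator-inequality chain and substitute the $S=\log$ bound into Lemma~\ref{lemmasmallenergy}; the hypotheses $V\in L^\infty$ and the surplus $C_0N^{5\kappa/2+\eps}$ you identify correctly.
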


\subsection{Remarks on previous results} 
Despite the difficulty of proving BEC in the thermodynamic limit, significant progress has been made in recent years on regimes that interpolate between the GP and thermodynamic limits. The first proof of $100\%$ BEC in the ground state in the GP limit was obtained in \cite{lieb2002proof} (see also \cite{lieb1998ground, lieb2000bosons} for related results on the ground-state energy) and was later extended to approximate ground states for rotating Bose gases \cite{lieb2006derivation, nam2016ground}. More recently, the optimal rate of convergence \eqref{eq:BEC} was established in \cite{boccato2018complete, boccato2020optimal}, which was a key step in deriving the Bogoliubov excitation spectrum in the GP regime \cite{boccato2019bogoliubov}. Subsequently, several generalizations and simplified proofs in the GP regime were obtained in \cite{adhikari2021bose, michelangeli2019ground, caraci2021bose, hainzl2021another, brennecke2022bose, nam2022optimal, boccato2023bose, nam2023condensation, brennecke2024short}.

In the case of beyond-GP scaling, the proof of BEC—although not explicitly stated—was already implied by the work \cite{lieb2002proof} for $\kappa \in [0, \frac{1}{10})$ (see \cite[Chapter~7]{lieb2005mathematics}). A different approach, based on unitary renormalizations, was developed in \cite{adhikari2021bose} to prove BEC for approximate ground states in the regime $\kappa \in [0, \frac{1}{43})$; this was later revisited and improved in \cite{brennecke2024short} to cover the regime $\kappa \in [0, \frac{1}{20})$. Currently, the best available result concerning the range of $\kappa$ is due to Fournais in \cite{fournais2020length}, who established it for $\kappa \in [0, \frac{2}{5})$. Their work uses a Fock space argument to bypass the problem of distributing particles into small boxes, an approach inspired by \cite{lieb1998ground} and also essential for our proof given below.

Localization methods are typically helpful in proofs of Bose–Einstein condensation (BEC), although some works do not rely on them; see, for instance, \cite{adhikari2021bose, brennecke2024short} for alternative approaches that require neither $x$-localization nor the smallness of the interaction. (On the technical side, the approaches in \cite{boccato2018complete, nam2022optimal, boccato2023bose} also avoid localization in the actual proof of BEC, but these works rely on a smallness assumption on the interaction, which is incompatible with going beyond the Gross–Pitaevskii regime.)  The best-known result in \cite{fournais2020length}, in particular, was obtained using a subtle localization argument developed to derive the Lee--Huang--Yang formula \cite{brietzke2020second, brietzke2020simple, fournais2020energy, fournais2023energy}, building on even earlier works \cite{conlon1988n, lieb2001ground, lieb2004ground}. Historically, a more natural localization method based on Neumann box confinement was used in \cite{lieb1998ground}, which inspired the proof of BEC in \cite{lieb2002proof} and the generalization of the Poincaré inequality in \cite{lieb2003poincare}. More recently, new techniques concerning Neumann localization have been developed in \cite{haberberger2023free}, leading to an extension of the Lee--Huang--Yang formula to positive temperatures.

In the present paper, we combine the techniques from \cite{nam2022optimal, fournais2020length, haberberger2023free} with Inequality~\eqref{eq:Poincare-Q-Qi} in Theorem~\ref{thm:Poincare} to provide a simplified, self-contained proof of BEC for $\kappa\in(0,\frac{2}{11})$. More precisely, in Section~\ref{sect:Poincare}, we prove the Poincaré-type inequality in Theorem~\ref{thm:Poincare}, which is of independent interest. Then, in Section~\ref{sect:localization}, we explain how Theorem~\ref{thm:Poincare} can be used to implement a localization argument, allowing us to deduce BEC beyond the GP regime from a quantitative lower bound of the Hamiltonian imposed with Neumann boundary condition within the GP regime, namely when $\kappa=0$. To make our proof self-contained, we establish in Proposition~\ref{prop:Neumann completing the square} in Appendix~\ref{sect:GP} a quick proof of such lower bound when the scattering length of $V$ is sufficiently small, by combining techniques from \cite{nam2022optimal, fournais2020length, haberberger2023free}. This, together with the localization argument, implies BEC for $\kappa\in(0,\frac{2}{11})$. To improve this result (see Theorem~\ref{thm:improved BEC}), a refinement of Proposition~\ref{prop:Neumann completing the square} is needed. Theorem~\ref{thm:improved BEC} represents a minor modification of the proof of Theorem~\ref{thm:BEC} but relies heavily on \cite[Theorem~1.1]{boccato2023bose}.

More recently, \cite{junge2026propagation} extend our BEC results for $\kappa$ slightly larger than $2/5$ by incorporating the exact Lee--Huang--Yang term with an $O(\rho a (\rho a^3)^{1/2+\eta})$ remainder estimate via an overlapping Neumann localization.

\section{Poincaré-type Inequality}\label{sect:Poincare} 

In this section, we prove Theorem~\ref{thm:Poincare}.
\begin{proof}[Proof of Theorem~\ref{thm:Poincare}] 
Given $i\in\{1,2,\cdots,M^{d}\}$. Note that $\Lambda_{i}$ has at most $2d$ adjacent cubes.  We write $j\sim i$ if $\Lambda_{j}$ and $\Lambda_{i}$ have a shared face. Without loss of generality, take $\Lambda_i$ and $\Lambda_j$ such that 
    $$\Lambda_{i}\cup \Lambda_{j}=\left([a, b]\cup [b,c]\right)\times I_{2}\times\cdots\times I_{d}\,,$$
    where $b-a = c-b = \ell$ and some suitable intervals $I_{k} \subset \R$ such that $|I_k|=\ell$.

\noindent{\bf Step 1:} We first pass from the continuum to the local averages on boxes.   Let $f$ be a smooth function  such that $\nabla f \not\equiv 0$ on $\Lambda_i$ and define
    \begin{equation}
        g(t):=\ell^{-d+1}\int_{I_{2}\times\cdots\times I_{d}}f(t,t_{2},\cdots,t_{d})\dd t_{2}\cdots \d t_{d}
    \end{equation}
    on $I=[a, b]$. 
    By the Gagliardo--Nirenberg inequality and the fact that $u=g-\langle f\rangle_{\Lambda_i}$ has zero mean on $I$, we see that for all $p \in (1, \infty)$ there exists $C^{\rm GN}_{p}>0$ such that
    \begin{align}
        \left|g(b)-\langle f\rangle_{\Lambda_i}\right|^p\le \sup_{t \in I}\left|g(t)-\langle f\rangle_{\Lambda_i}\right|^p \le C^{\rm GN}_{p}\left(\int_I|g-\langle f\rangle_{\Lambda_i}|^p\right)^{1-\frac{1}{p}}\left(\int_I |g'|^p\right)^{\frac{1}{p}}.
    \end{align}
    Then applying the Cauchy--Schwarz inequality yields 
    \begin{align}\label{est:agmon-type}
        \left|g(b)-\langle f\rangle_{\Lambda_i}\right|^p\le \frac{C^{\rm GN}_{p}}{\ell^{d-1}}\left(\int_{\Lambda_{i}}\left|f-\left\langle f \right\rangle_{\Lambda_{i}} \right\vert^{p} \right)^{1-\frac{1}{p}}\left(\int_{\Lambda_i}|\nabla f|^{p}\right)^{\frac{1}{p}}\,.
    \end{align}
    The same argument applies to $\Lambda_j$. Summing \eqref{est:agmon-type} over  $\Lambda_i$ and $\Lambda_j$, which share a common side, gives the estimate  
    \begin{align*}
                \left|\langle f\rangle_{\Lambda_j}-\langle f\rangle_{\Lambda_i}\right|^p \le \frac{C_{p,d}}{\ell^{d-1}}\left(\int_{\Lambda_{i}}\left|f-\left\langle f \right\rangle_{\Lambda_{i}} \right\vert^{p}+\int_{\Lambda_{j}}\left|f-\left\langle f \right\rangle_{\Lambda_{j}} \right\vert^{p} \right)^{1-\frac{1}{p}}\left(\int_{\bigcup_{k\sim i}\Lambda_k}|\nabla f|^{p}\right)^{\frac{1}{p}}\,.
    \end{align*}

    Summing over $j$ such that $j\sim i$, we have
    \begin{equation}\label{est:ratio}
        \frac{\left(\sum_{j:j\sim i}\left|\left\langle f\right\rangle_{\Lambda_{i}}-\left\langle f\right\rangle_{\Lambda_{j}}\right\vert\right)^{p}}{\left(\int_{\bigcup_{k\sim i}\Lambda_k}|\nabla f|^{p}\right)^{\frac{1}{p}}}\leq\frac{C_{p,d}}{\ell^{d-1}}\sum_{j:j\sim i}\left(\int_{\Lambda_{j}}\abs{f-\left\langle f\right\rangle_{\Lambda_{j}}}^{p}\right)^{1-\frac{1}{p}}\,.
    \end{equation}
    Summing over $i$, it follows by H\"older's inequality and \eqref{est:ratio} we have the bound \begin{equation}\label{est:l2_bound_on_discrete_Laplacian}
    \begin{aligned}
    \sum\limits_{i=1}^{M^{d}}\left(\sum_{j:j\sim i}\left|\left\langle f \right\rangle_{\Lambda_{i}}-\left\langle f \right\rangle_{\Lambda_{j}}  \right\vert\right)^{p}  \le 
        \frac{C_{p,d}}{\ell^{d-p}}\left(\frac{1}{\ell^{p}}\sum_{i=1}^{M^{d}}\int_{\Lambda_{i}}\left|f-\left\langle f \right\rangle_{\Lambda_{i}} \right\vert^{p}\right)^{1-\frac{1}{p}}\left(
        \int_{\Lambda}\abs{\nabla f}^{p}\right)^{\frac{1}{p}}.
    \end{aligned}
    \end{equation}
\bigskip

\noindent{\bf Step 2:} The previous step reduced the problem on $\Lambda$ to a problem on the grid $\llbracket M\rrbracket^d\subset \Z^d$. 
To obtain estimates on $\llbracket M\rrbracket^d$, we use the fact that the Cheeger constant of $\llbracket M\rrbracket^d$ is inversely proportional to $M$, namely we have the following discrete Poincar\'e-type inequality on $\llbracket M\rrbracket^d$. For all $d\ge 1$ and $p>1$, there exists $C_{p, d}>0$ such that the following holds
    \begin{equation}\label{est:discrete_poincare}
        \left(\sum_{i=1}^{M^{d}}\left|\left\langle f \right\rangle_{\Lambda}-\left\langle f \right\rangle_{\Lambda_{i}}  \right\vert^{p}\right)^{\frac{1}{p}}\le C_{p,d} M \left(\sum_{i=1}^{M^{d}}\left(\sum_{j\sim i}\left|\left\langle f \right\rangle_{\Lambda_{i}}-\left\langle f \right\rangle_{\Lambda_{j}}  \right\vert\right)^{p}\right)^{\frac{1}{p}}\,. 
    \end{equation}
A direct proof of \eqref{est:discrete_poincare} can be founded in Appendix \ref{appendix:discretepoincare}. Combining \eqref{est:l2_bound_on_discrete_Laplacian}, \eqref{est:discrete_poincare}, and the fact $M=1/\ell$ yields
\begin{align}\label{est:difference_local_and_global_averages-(p,p)}
    \left(\frac{1}{M^d}\sum_{i=1}^{M^{d}}\left|\left\langle f \right\rangle_{\Lambda_{i}}-\left\langle f \right\rangle_{\Lambda}  \right\vert^{p}\right)^{\frac{1}{p}} \le C_{p, d}\left(\frac{1}{\ell^{p}}\sum_{i=1}^{M^{d}}\int_{\Lambda_{i}}\left|f-\left\langle f \right\rangle_{\Lambda_{i}} \right\vert^{p}\right)^{\frac{1-\theta}{p}}\|\nabla f\|_{L^{p}(\Lambda)}^{\theta}\,,
\end{align}
where $\theta = \frac{1}{p} \in (0, 1)$. Using the elementary inequality $(a+b)^p\leq 2^{p-1}(a^p+b^p)$ for non-negative $a$ and $b$, together with \eqref{est:difference_local_and_global_averages-(p,p)}, we have 
\begin{align*}
    &\norm{f-\weight{f}_{\Lambda}}_{L^p(\Lambda)}^p\\
    &\leq 2^{p-1}\left(\sum_{i=1}^{M^d}\int_{\Lambda_i}\left|f-\weight{f}_{\Lambda_i}\right\vert^p+\frac{1}{M^d}\sum_{i=1}^{M^d}\left|\weight{f}_{\Lambda}-\weight{f}_{\Lambda_i}\right\vert^p\right)\\
    &\leq 2^{p-1}\left(\sum_{i=1}^{M^d}\int_{\Lambda_i}\left|f-\weight{f}_{\Lambda_i}\right\vert^p+C_{p,d}\norm{\nabla f}_{L^p(\Lambda)}\left(\frac{1}{\ell^p}\sum_{i=1}^{M^d}\int_{\Lambda_i}\left|f-\weight{f}_{\Lambda_i}\right\vert^p\right)^{1-\frac{1}{p}}\right)\\
    &=2^{p-1}\left(C_{p,d}\norm{\nabla f}_{L^p(\Lambda)}+\ell^{p-1}\left(\sum_{i=1}^{M^d}\int_{\Lambda_i}\left|f-\weight{f}_{\lambda_i}\right\vert^p\right)^{\frac{1}{p}}\right)\left(\frac{1}{\ell^p}\sum_{i=1}^{M^d}\int_{\Lambda_i}\left|f-\weight{f}_{\Lambda_i}\right\vert^p\right)^{1-\frac{1}{p}}\,.
\end{align*}
Recall that $\ell\leq 1$. The classical Poincar\'e inequality implies
\begin{align*}
    \ell^{p-1}\left(\sum_{i=1}^{M^d}\int_{\Lambda_i}\left|f-\weight{f}_{\lambda_i}\right\vert^p\right)^{\frac{1}{p}}\leq \ell^{p-1}\left( C_{p,d}^p\ell^p\sum_{i=1}^{M^d}\int_{\Lambda_i}\left|\nabla f\right\vert^p\right)^{\frac{1}{p}}\leq C_{p,d}\norm{\nabla f}_{L^p(\Lambda)}.
\end{align*}
We obtain \eqref{ineq:Poincare} for all smooth functions $f$ with $\nabla f\not\equiv 0$. The argument extends to all functions $f\in W^{1,p}(\Lambda)$, which concludes the proof of \eqref{ineq:Poincare}.

\noindent{\bf Step 3:} Choosing $p=2$ in \eqref{ineq:Poincare} and using the Cauchy--Schwarz inequality, we have 
    \begin{align*}
        \int_{\Lambda}\left|f-\left\langle f \right\rangle_{\Lambda} \right\vert^{2}&\leq C_{2,d}\ell^{-1}\sqrt{\left( \int_{\Lambda}|\nabla f|^{2} \right)\left( \sum_{i=1}^{M^{d}}\int_{\Lambda_i}\abs{f-\left\langle f \right\rangle_{\Lambda_i} }^{2} \right)  }\\
        &\leq \eps \int_{\Lambda}|\nabla f|^{2} + \frac{C_{2,d}^{2}}{ 4\eps \ell^{2}} \sum_{i=1}^{M^{d}}\int_{\Lambda_i}\abs{f-\left\langle f \right\rangle_{\Lambda_i} }^{2}
    \end{align*}
    for all $f\in H^{1}(\Lambda)$ and $\eps>0$. This is equivalent to \eqref{eq:Poincare-Q-Qi}. 
 \end{proof}

\section{Proof of Theorem \ref{thm:BEC}}\label{sect:localization}

We use the Neumann box localization argument mentioned in the previous section to prove Theorem~\ref{thm:BEC}, that is, we cut the large box of side length 1 into several small cells of side length $\ell$. In each small cell, the localized Hamiltonian (rescaled) describes the system in the GP regime for some suitable choice of $\ell$ and the kinetic energy operator is described by the Neumann Laplacian on the small cell.

On the technical side, the localization of the interaction potential energy is simple due to the nonnegativity of $V$. The main difficulty is to localize the number operator $\cN_{+}$, for which we will sacrifice a little bit of kinetic energy and use the Poincare-type inequality in Theorem \ref{thm:Poincare}.

\subsection{Kinetic energy localization}

From now on, we only consider $d=3$. 

Let $\Lambda=\bigcup _{i=1}^{M^{3}}\Lambda_i$. As a multiplication operator, $\1_{\Lambda_{i}}$ defines an orthogonal projection $\1_{\Lambda_{i}}:L^{2}(\Lambda)\to L^{2}(\Lambda_i)$. Recall $P_{\Lambda_{i}},Q_{\Lambda_{i}}$ are the projections $P_{\Lambda_{i}},Q_{\Lambda_{i}}:L^{2}(\Lambda)\to L^{2}(\Lambda_i)$ defined by 
\begin{equation}
    Q_{\Lambda_{i}} = \1_{\Lambda_{i}} - P_{\Lambda_{i}}, \quad P_{\Lambda_{i}} = \frac{1}{|\Lambda_{i}|} \ket{\1_{\Lambda_{i}}}\!\bra{\1_{\Lambda_{i}}}\, . 
\end{equation}
It is easy to see that $\{\1_{\Lambda_{i}}\}$ induce a unitary isomorphism:
\begin{equation}\label{unitaryiso}
    L^{2}(\Lambda)\simeq \bigoplus \limits_{i=1}^{M^{3}}\1_{\Lambda_{i}}\left(L^{2}(\Lambda)\right)\simeq \bigoplus_{i=1}^{M^{3}}L^{2}(\Lambda_i). 
\end{equation}

We have the following useful operator lower bound, which is a direct result of our Poincare-type inequality \eqref{ineq:Poincare}.

\begin{lemma}[Kinetic energy localization]\label{kineticloc}
    There exists a constant $C$, dependent on the dimesion $d$, such that for any $\alpha\geq0,\ell<1/2$, the following lower bound (in the sense of quadratic forms) holds:
    \begin{equation}
        -\Delta_{\Lambda}-\ell^{2+\alpha}Q_{\Lambda}\geq \left( 1-C\ell^{4+2\alpha} \right)\sum_{i=1}^{M^{3}}\left( -\Delta_{\Lambda_{i}}^{\Neu}-\frac{\pi/8}{\ell^{2}}Q_{\Lambda_{i}} \right),  
    \end{equation}
    where $\ell=1/M$ and we used \eqref{unitaryiso} to identify $L^{2}(\Lambda)$ and $\bigoplus_{i=1}^{M^{3}}L^{2}(\Lambda_i)$.
\end{lemma}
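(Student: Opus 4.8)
The plan is to derive Lemma~\ref{kineticloc} from the Poincar\'e inequality \eqref{eq:Poincare-Q-Qi} by a scaling argument combined with the fact that the full-box Neumann Laplacian dominates the direct sum of the subcube Neumann Laplacians. First I would recall that, since $\Lambda=\bigcup_{i}\Lambda_i$ with the $\Lambda_i$ having disjoint interiors, the quadratic form domain $H^1(\Lambda)$ embeds into $\bigoplus_i H^1(\Lambda_i)$ by restriction, and for $f\in H^1(\Lambda)$ one has $\int_\Lambda|\nabla f|^2=\sum_i\int_{\Lambda_i}|\nabla f|^2$; since each restriction $f|_{\Lambda_i}$ is an admissible trial function for the Neumann form on $\Lambda_i$, this gives the operator inequality $-\Delta_\Lambda\ge\bigoplus_i(-\Delta_{\Lambda_i}^{\Neu})$ under the identification \eqref{unitaryiso}. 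Similarly $Q_\Lambda$ does \emph{not} split, but \eqref{eq:Poincare-Q-Qi} bounds it from above by $\eps(-\Delta_\Lambda)+\frac{C_{2,d}^2}{2\eps\ell^2}\sum_iQ_{\Lambda_i}$; this is exactly the mechanism for trading the nonlocal projection $Q_\Lambda$ against a small multiple of kinetic energy plus local projections.

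The second step is the bookkeeping of constants. I want to absorb $\ell^{2+\alpha}Q_\Lambda$: applying \eqref{eq:Poincare-Q-Qi} with a parameter $\eps'$ to be chosen, $\ell^{2+\alpha}Q_\Lambda\le\ell^{2+\alpha}\eps'(-\Delta_\Lambda)+\frac{C_{2,d}^2\,\ell^{2+\alpha}}{2\eps'\ell^2}\sum_iQ_{\Lambda_i}=\ell^{2+\alpha}\eps'(-\Delta_\Lambda)+\frac{C_{2,d}^2\,\ell^{\alpha}}{2\eps'}\sum_iQ_{\Lambda_i}$. To make the coefficient of $\sum_iQ_{\Lambda_i}$ come out exactly $\frac{\pi/8}{\ell^2}$ (times the prefactor $1-C\ell^{4+2\alpha}$ — or a bit less, which is fine since we only need a lower bound), I would solve for $\eps'$, i.e. $\eps'\sim \frac{8C_{2,d}^2}{2\pi}\ell^{2+\alpha}$, which one must check lies in the admissible range $\bigl(0,\frac{C_{2,d}^2}{4\ell^2}\bigr]$ — this holds for $\ell$ small, or more precisely for $\ell<1/2$ after possibly shrinking the constant $C$. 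Then $\ell^{2+\alpha}\eps'(-\Delta_\Lambda)\sim \ell^{4+2\alpha}(-\Delta_\Lambda)$, so after subtracting we get $-\Delta_\Lambda-\ell^{2+\alpha}Q_\Lambda\ge(1-C\ell^{4+2\alpha})(-\Delta_\Lambda)-\frac{\pi/8}{\ell^2}(1-C\ell^{4+2\alpha})^{-1}\cdot(1-C\ell^{4+2\alpha})\sum_iQ_{\Lambda_i}$; being slightly careful, I would instead just write $-\Delta_\Lambda-\ell^{2+\alpha}Q_\Lambda\ge(1-C\ell^{4+2\alpha})(-\Delta_\Lambda)-\frac{\pi/8}{\ell^2}\sum_iQ_{\Lambda_i}$ after choosing $\eps'$ so the local coefficient is at most $\pi/(8\ell^2)$ — there is freedom here because $C_{2,d}^2$ can be assumed $\ge \pi/4$ (any lower bound on the Poincar\'e constant for a unit cube works), or one simply relabels.

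The third step uses $-\Delta_\Lambda\ge\bigoplus_i(-\Delta_{\Lambda_i}^{\Neu})$ to replace the global kinetic term: the right-hand side is then $\ge(1-C\ell^{4+2\alpha})\sum_i(-\Delta_{\Lambda_i}^{\Neu})-\frac{\pi/8}{\ell^2}\sum_iQ_{\Lambda_i}$. Finally I factor out $(1-C\ell^{4+2\alpha})$: since $0<1-C\ell^{4+2\alpha}<1$, we have $-\frac{\pi/8}{\ell^2}\sum_iQ_{\Lambda_i}\ge -(1-C\ell^{4+2\alpha})^{-1}\frac{\pi/8}{\ell^2}(1-C\ell^{4+2\alpha})\sum_iQ_{\Lambda_i}$... this is circular, so the cleanest route is: arrange in step two that the coefficient of $\sum_iQ_{\Lambda_i}$ is exactly $(1-C\ell^{4+2\alpha})\frac{\pi/8}{\ell^2}$ rather than $\frac{\pi/8}{\ell^2}$. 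Concretely, choose $\eps'=\frac{4C_{2,d}^2}{\pi}\cdot\frac{\ell^{2+\alpha}}{1-C\ell^{4+2\alpha}}$; then the kinetic coefficient absorbed is $\ell^{2+\alpha}\eps'=\frac{4C_{2,d}^2}{\pi}\cdot\frac{\ell^{4+2\alpha}}{1-C\ell^{4+2\alpha}}$, which is $\le C\ell^{4+2\alpha}$ for an appropriate $C$ and $\ell<1/2$, and the local coefficient is $\frac{C_{2,d}^2\ell^\alpha}{2\eps'}=(1-C\ell^{4+2\alpha})\frac{\pi/8}{\ell^2}$. Then $-\Delta_\Lambda-\ell^{2+\alpha}Q_\Lambda\ge(1-C\ell^{4+2\alpha})(-\Delta_\Lambda)-(1-C\ell^{4+2\alpha})\frac{\pi/8}{\ell^2}\sum_iQ_{\Lambda_i}\ge(1-C\ell^{4+2\alpha})\sum_i\bigl(-\Delta_{\Lambda_i}^{\Neu}-\frac{\pi/8}{\ell^2}Q_{\Lambda_i}\bigr)$, which is the claim. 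The only genuine subtlety — hardly an obstacle — is verifying the admissibility constraint $\eps'\le C_{2,d}^2/(4\ell^2)$ and the positivity $1-C\ell^{4+2\alpha}>0$ hold simultaneously for all $\ell<1/2$ and $\alpha\ge0$ after fixing $C$; since $\eps'=O(\ell^{2+\alpha})$ this is immediate once $C$ is chosen large enough (and one may always enlarge $C$ at the end).
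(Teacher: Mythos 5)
Your proposal is correct and follows essentially the same route as the paper: apply the Poincar\'e inequality \eqref{eq:Poincare-Q-Qi} with $\eps$ of order $\ell^{2+\alpha}$ to trade $\ell^{2+\alpha}Q_\Lambda$ for $O(\ell^{4+2\alpha})(-\Delta_\Lambda)$ plus $O(\ell^{-2})\sum_iQ_{\Lambda_i}$, use Neumann bracketing $-\Delta_\Lambda\geq\bigoplus_i(-\Delta^{\Neu}_{\Lambda_i})$, and factor out $(1-C\ell^{4+2\alpha})$. Your bookkeeping of the coefficient of $\sum_iQ_{\Lambda_i}$ is if anything more careful than the paper's (whose final step silently replaces $\pi/4$ by the smaller $(1-C\ell^{4+2\alpha})\,\pi/8$), and your residual worry about the admissibility of $\eps'$ for $\ell$ close to $1/2$ is harmless, since the summands $-\Delta^{\Neu}_{\Lambda_i}-\frac{\pi/8}{\ell^{2}}Q_{\Lambda_i}$ are nonnegative by the Neumann spectral gap $\pi^{2}/\ell^{2}$, so the claimed bound becomes trivial (and enlarging $C$ only weakens it) once $1-C\ell^{4+2\alpha}$ is small or negative.
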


\begin{proof}
    As in Step 3 of the proof of Theorem~\ref{thm:Poincare}, we have 
    \begin{align*}
        \int_{\Lambda}\left|f-\left\langle f \right\rangle_{\Lambda} \right\vert^{2}
        &\leq\frac{1}{\ell^{2+\alpha}}\frac{\pi/4}{\ell^{2}}\sum_{i=1}^{M^{3}}\int_{\Lambda_{i}}\abs{f-\left\langle f \right\rangle_{\Lambda_{i}} }^{2}+C\ell^{2+\alpha}\int_{\Lambda}|\nabla f|^{2}\,,
    \end{align*}
    for all $f\in H^{1}(\Lambda)$, which is equivalent to the following operator inequality:
    \begin{equation}
        Q_{\Lambda}\leq\frac{1}{\ell^{2+\alpha}}\frac{\pi/4}{\ell^{2}}\sum_{i=1}^{M^{3}}Q_{\Lambda_{i}}+C\ell^{2+\alpha}\left(-\Delta_{\Lambda}\right)\,.
    \end{equation}
    Then, we obtain 
    \begin{align*}
        -\Delta_{\Lambda}-\ell^{2+\alpha}Q_{\Lambda}&\geq \left( 1-C\ell^{4+2\alpha} \right)\sum_{i=1}^{M^{3}}-\Delta_{\Lambda_{i}}^{\Neu}-\frac{\pi/4}{\ell^{2}}\sum_{i=1}^{M^{3}}Q_{\Lambda_{i}}\\
        &= \left( 1-C\ell^{4+2\alpha} \right)\sum_{i=1}^{M^{3}}\left(-\Delta_{\Lambda_{i}}^{\Neu}-\frac{1}{1-C\ell^{4+\alpha}}\frac{\pi/4}{\ell^{2}}Q_{\Lambda_{i}}\right)\\
        &\geq\left( 1-C\ell^{4+2\alpha} \right)\sum_{i=1}^{M^{3}}\left(-\Delta_{\Lambda_{i}}^{\Neu}-\frac{\pi/8}{\ell^{2}}Q_{\Lambda_{i}}\right)\, ,
    \end{align*}
    which completes the proof of the lemma.
\end{proof}

 \begin{remark}
   This result should be compared with \cite[Lemma 3.3]{fournais2020length} when restricted to the periodic setting. Since our proof of the kinetic localization lemma relies on the Poincaré-type inequality \eqref{ineq:Poincare}, we must necessarily expend a small portion of the kinetic energy to control $Q_{\Lambda}$. The original goal, as in \cite{fournais2020length}, was to control the sub-Hamiltonians via the gap $-\Delta_{\Lambda}-Q_{\Lambda}$; however, because the Poincaré-type inequality consumes part of the kinetic energy, we can only control the sub-Hamiltonians using $-\Delta_{\Lambda}-\ell^{2+\alpha} Q_{\Lambda}$, which naturally yields a weaker bound as $\ell \to 0$. This is because we now face a trade-off: $\ell$ must be small so that $\ell^{2+\alpha}$ multiplied by the leading order energy remains subleading relative to the Lee--Huang--Yang term, but also large enough to ensure the existence of BEC.
 \end{remark}

\subsection{Proof of Theorem~\ref{thm:BEC}}

Let $V_{N,\kappa}(x):=N^{2-2\kappa}V(N^{1-\kappa}x)$, whose scattering length $\a$ satisfies $\a=\a_{0}/N^{1-\kappa}$. Given a parameter $\rho_{\mu}>0$, consider the following Hamiltonian operator $\cH_{\rho_{\mu}}$ define on the symmetric Fock space $\cF_{s}(L^{2}(\Lambda)):=\C\oplus\bigoplus^\infty_{n=1}L^2_s(\Lambda^n)$. The operator $\cH_{\rho_{\mu}}$ commutes with the particle number and satisfies
\begin{equation}
    (\cH_{\rho_{\mu}})_{n}:=\sum_{i=1}^{n}(\left(-\Delta_{\Lambda}\right)_{x_{i}}-\ell^{2+\alpha}(Q_{\Lambda})_{x_{i}})+\sum_{1\leq i<j\leq n}V_{N,\kappa}(x_{i}-x_{j})-8\pi\a\rho_{\mu}n.
\end{equation}
Recall from \eqref{unitaryiso} that we have $L^{2}(\Lambda)\cong \bigoplus_{i=1}^{M^{3}}L^{2}(\Lambda_i)$, which implies the following isomorphism of Fock spaces:
\begin{equation}
    \cF_{s}(L^2(\Lambda))\cong \bigotimes \limits_{k=1}^{M^{3}}\cF_{s}(L^2(\Lambda_i))\,.
\end{equation}
Using this identification and Lemma~\ref{kineticloc}, we have
\begin{align*}
  (\cH_{\rho_{\mu}})_{n}\geq&\, \sum_{i=1}^{M^{3}}\left[ \left( 1-C\ell^{4+2\alpha} \right)\sum_{j=1}^{n}\left( (-\Delta_{\Lambda_{i}}^{\Neu})_{x_{j}}-\frac{\pi/8}{\ell^{2}}(Q_{\Lambda_{i}})_{x_{j}}-8\pi\a\rho_{\mu} \right) \right]\\
    &\, +\sum_{i=1}^{M^{3}}\left[\sum_{1\leq j<l\leq n}\1_{\Lambda_{i}}(x_{j})V_{N,\kappa}(x_{j}-x_{l})\1_{\Lambda_{i}}(x_{l}) \right]-C\ell^{4+2\alpha}\a\rho_{\mu}n\\
    =:&\, \left( 1-C\ell^{4+2\alpha} \right)\sum_{i=1}^{M^{3}}\left( \cH_{\rho_{\mu}}(\Lambda_i) \right)_{n}-C\ell^{4+2\alpha}\a\rho_{\mu}n\,,
\end{align*}
which could be lifted to the Fock space as
\begin{equation}\label{Hamiltonloc}
    \cH_{\rho_{\mu}}\geq \left( 1-C\ell^{4+2\alpha} \right)\sum_{i=1}^{M^{3}}\cH_{\rho_{\mu}}(\Lambda_i)-C\ell^{4+2\alpha}\a\rho_{\mu}{\cN}. 
\end{equation}
Here, the $n$-body sector of $\cH_{\rho_{\mu}}(\Lambda_{i})$ is defined by
\begin{multline}
        \left( \cH_{\rho_{\mu}}(\Lambda_i) \right)_{n}:=\sum_{j=1}^{n}\left( (-\Delta_{\Lambda_{i}}^{\Neu})_{x_{j}}-\frac{\pi/8}{\ell^{2}}(Q_{\Lambda_{i}})_{x_{j}} \right)\\+\sum_{1\leq j<l\leq n}\left(\1_{\Lambda_{i}}V_{N,\kappa}\1_{\Lambda_{i}}\right)_{x_{j},x_{l}}-8\pi\a\rho_{\mu}n\,,
\end{multline}
with the two-body multiplication operator $\left(\1_{\Lambda_{i}}V_{N,\kappa}\1_{\Lambda_{i}}\right)_{x_{j},x_{l}}=\1_{\Lambda_{i}}(x_{j})V_{N,\kappa}(x_{j}-x_{l})\1_{\Lambda_{i}}(x_{l})$.

\begin{lemma}\label{lemmasmallenergy}
    Let $\ell=K^{-1}\frac{1}{\sqrt{\rho_{\mu}\a}}$ with $K$ being large ($K=20$ suffices). Then
    \begin{equation}
        \cH_{\rho_{\mu}}(\Lambda_{i})\geq-4\pi\rho_{\mu}^{2}\a\ell^{3}-C\rho_{\mu}^{2}\a\ell^{3}\left( \rho_{\mu}\a^{3} \right)^{\frac{1}{2}}S \left( \left( \rho_{\mu}\a^{3} \right)^{-1/2}  \right)  
    \end{equation}
    for a sufficiently small value of $\rho_{\mu}\a^{3}$. Here, $S(\cdot)$ is the same as in Proposition~\ref{prop:Neumann completing the square}  and the result is uniform in $i\in\{1,\cdots,M^{3}\}$.
\end{lemma}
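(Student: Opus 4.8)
The plan is to deduce Lemma~\ref{lemmasmallenergy} from Proposition~\ref{prop:Neumann completing the square} by rescaling the cell $\Lambda_i$, of side length $\ell$, to the unit cube $\Lambda=[-1/2,1/2]^3$. For each $n$, let $c_i$ be the center of $\Lambda_i$ and let $U_{\Lambda_i}\colon L^2(\Lambda_i^n)\to L^2(\Lambda^n)$ be the unitary $(U_{\Lambda_i}\phi)(y_1,\dots,y_n)=\ell^{3n/2}\phi(c_i+\ell y_1,\dots,c_i+\ell y_n)$; these lift to a unitary between the symmetric Fock spaces. On $L^2(\Lambda_i^n)$ the two-body operator $\1_{\Lambda_i}V_{N,\kappa}\1_{\Lambda_i}$ acts simply as multiplication by $V_{N,\kappa}(x_j-x_l)$, and conjugation by $U_{\Lambda_i}$ sends $-\Delta^{\Neu}_{\Lambda_i}\mapsto\ell^{-2}(-\Delta^{\Neu}_\Lambda)$, $Q_{\Lambda_i}\mapsto Q_\Lambda$, and $V_{N,\kappa}(x_j-x_l)\mapsto \ell^{-2}U(y_j-y_l)$ with $U(y):=\ell^2V_{N,\kappa}(\ell y)$. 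The usual scaling of the scattering length ($\lambda^{-2}W(\cdot/\lambda)$ has scattering length $\lambda$ times that of $W$) shows that $U$ has scattering length $\tilde\a:=\a/\ell$ and is supported in a ball of radius $\ell^{-1}RN^{-(1-\kappa)}$. Writing $\tilde\rho_\mu:=\ell^3\rho_\mu$, one thus has the exact identity of quadratic forms
\[
U_{\Lambda_i}(\cH_{\rho_\mu}(\Lambda_i))_n U_{\Lambda_i}^{*}=\ell^{-2}(\cG)_n,\qquad (\cG)_n:=\sum_{j=1}^n\Big((-\Delta^{\Neu}_\Lambda)_{y_j}-\tfrac{\pi}{8}(Q_\Lambda)_{y_j}\Big)+\sum_{j<l}U(y_j-y_l)-8\pi\tilde\a\tilde\rho_\mu n.
\]

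The second step is to check that $\cG$ falls within the scope of Proposition~\ref{prop:Neumann completing the square}. With $\ell=K^{-1}(\rho_\mu\a)^{-1/2}$ one computes $\tilde\a=K(\rho_\mu\a^3)^{1/2}$ and $\tilde\rho_\mu=K^{-3}(\rho_\mu\a^3)^{-1/2}$, hence $8\pi\tilde\a\tilde\rho_\mu=8\pi K^{-2}$ and, crucially, the diluteness parameter is preserved under the rescaling: $\tilde\rho_\mu\tilde\a^{3}=\rho_\mu\a^{3}$. Since $\rho_\mu\a^3$ is assumed small, $\tilde\a$ is small; and choosing $K$ large (e.g. $K=20$) makes $8\pi K^{-2}$ small while keeping $\ell<1/2$, so that $1/\ell$ can be taken in $\N$ after a harmless adjustment. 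With these smallness conditions in place, Proposition~\ref{prop:Neumann completing the square} applies to $\cG$ and gives
\[
\cG\ \ge\ -4\pi\tilde\rho_\mu^{2}\tilde\a-C\tilde\rho_\mu^{2}\tilde\a\,(\tilde\rho_\mu\tilde\a^{3})^{1/2}\,S\big((\tilde\rho_\mu\tilde\a^{3})^{-1/2}\big).
\]

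Finally one undoes the scaling. Multiplying the last display by $\ell^{-2}$ and substituting $\tilde\rho_\mu=\ell^3\rho_\mu$, $\tilde\a=\a/\ell$ gives $\ell^{-2}\cdot4\pi\tilde\rho_\mu^{2}\tilde\a=4\pi\rho_\mu^{2}\a\ell^{3}$ and $\ell^{-2}\cdot C\tilde\rho_\mu^{2}\tilde\a(\tilde\rho_\mu\tilde\a^{3})^{1/2}S(\cdot)=C\rho_\mu^{2}\a\ell^{3}(\rho_\mu\a^{3})^{1/2}S\big((\rho_\mu\a^{3})^{-1/2}\big)$, which is exactly the asserted lower bound for $\cH_{\rho_\mu}(\Lambda_i)$; the bound is uniform in $i$ because none of the constants depend on the center $c_i$. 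The genuine content is entirely inside Proposition~\ref{prop:Neumann completing the square}; within this lemma the only delicate points are the bookkeeping of the powers of $\ell$, $K$, $\rho_\mu$ and $\a$ (so that the final constants come out exactly $4\pi$ and $C$), and the verification that the coefficient $\pi/8$ of the $Q_\Lambda$-term and the small support radius of $U$ lie within the range permitted by that proposition. I expect the main nuisance to be simply keeping all these scaling exponents straight.
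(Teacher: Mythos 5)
Your rescaling step (first and third paragraphs) is correct and matches the paper, but the middle of your argument has a genuine gap: Proposition~\ref{prop:Neumann completing the square} does not give the bound you quote for $\cG$. Its hypothesis is $\a_0 n/\ell<1/24$ with $n$ the particle number, which after undoing your rescaling reads $n\a/\ell<1/24$, i.e.\ $n\lesssim \ell/\a$; and its conclusion is an $n$-\emph{dependent} bound of the form $4\pi\a_0 n^2/\ell+(\cdots)\cN_+-C(n/\ell)^2S(\ell)-C(n/\ell)$. The operator $\cH_{\rho_\mu}(\Lambda_i)$ lives on the full Fock space over $L^2(\Lambda_i)$, so the number of particles in the cell is not a priori bounded: if too many particles pile into one cell the diluteness hypothesis of the proposition simply fails, and even where it holds you must still explain how the quadratic interaction term $4\pi\a n^2/\ell^3$ beats the chemical potential $-8\pi\a\rho_\mu n$ to produce the $n$-independent constant $-4\pi\rho_\mu^2\a\ell^3$. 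You cannot get from the proposition to the display "$\cG\ge-4\pi\tilde\rho_\mu^2\tilde\a-\cdots$" by bookkeeping alone.

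The paper closes this gap with two ingredients you are missing. First, using $V\ge0$ it partitions the $n$ particles in a cell into groups $S_j$ with $|S_j|\in[3\rho_\mu\ell^3,4\rho_\mu\ell^3]$ (plus one remainder group), so that
\begin{equation*}
\bigl\langle \Psi,(\cH_{\rho_\mu}(\Lambda_1))_n\Psi\bigr\rangle\ \ge\ \sum_{j}\inf\spec\,(\cH_{\rho_\mu}(\Lambda_1))_{|S_j|}\,,
\end{equation*}
reducing everything to $n\le4\rho_\mu\ell^3$, which with $\ell=K^{-1}(\rho_\mu\a)^{-1/2}$ gives exactly $n\a/\ell\le4K^{-2}=1/100$, so the proposition applies. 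Second, it completes the square, $4\pi\a n^2/\ell^3-8\pi\a\rho_\mu n\ge 2\pi(\a/\ell^3)(\rho_\mu\ell^3-n)^2-4\pi\a\rho_\mu^2\ell^3$, and checks that the groups with $|S_j|\in[3\rho_\mu\ell^3,4\rho_\mu\ell^3]$ contribute nonnegatively (the square dominates the error terms for $\rho_\mu\a^3$ small), so only the single remainder group produces the negative constant in the lemma. Without the partitioning and the completion of the square, your argument does not yield a lower bound uniform in the cell occupation number, which is the whole point of the lemma.
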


\begin{proof}
    We follow a similar approach as in the proof given in \cite[Theorem 2.1]{fournais2020length}.
    Since all these operators are unitary equivalent, it suffices to consider $i=1$.  Notice that $\cH_{\rho_{\mu}}(\Lambda_{1})$ commutes with the particle number operator, so it suffices to establish the lower bound for $\left\langle \Psi, (\cH_{\rho_{\mu}}(\Lambda_{1}))_{N}\Psi \right\rangle $ for $\Psi\in L^{2}_{s}(\Lambda_{1}^{N})$ for some arbitrary particle number $N$. Since $\rho_{\mu}\ell^{3}=K^{-3}(\rho_{\mu}\a^{3})^{-1/2}\gg 1$, we partition the $N$ particles into groups with  $\rho_{\mu}\ell^{3}$ order number of particles. Let 
    \begin{equation*}
        \{1,\cdots,N\}=\bigcup \limits_{j=1}^{\xi}S_{j},\ \ \ S_{j}\cap S_{k}=\varnothing \ {\rm for}\ j\neq k
    \end{equation*}
    and 
    \begin{equation*}
        |S_{\xi}|\leq 4\rho_{\mu}\ell^{3},\ \ |S_{j}|\in \left[ 3\rho_{\mu}\ell^{3},4\rho_{\mu}\ell^{3} \right] \ {\rm for}\ j<\xi.
    \end{equation*}
    Using the positivity of $V$, we obtain the lower bound
    \begin{equation}\label{smallenergy1}
 \left\langle \Psi,(\cH_{\rho_{\mu}}(\Lambda_{1}))_{N}\Psi \right\rangle\geq\sum_{j=1}^{\xi}{\rm inf}\spec (\cH_{\rho_{\mu}}(\Lambda_{1}))_{|S_{j}|}\,
    \end{equation}
    which is simply a consequence of convexity and of dropping interactions
among different small boxes of particles.   Hence, it suffices to reduce the analysis to $(\cH_{\rho_{\mu}}(\Lambda_{1}))_{n}$ with $n\leq 4\rho_{\mu}\ell^{3}$, i.e. $\frac{n\a}{\ell}\leq \frac{1}{100}.$

    After rescaling, $(\cH_{\rho_{\mu}}(\Lambda_{1}))_{n}+8\pi\a\rho_{\mu}n$ is unitary equivalent to $\frac{1}{\ell^{2}}H_{n,\ell}^{\Neu}$, where 
    \begin{equation}
        H_{n,\ell}^{\Neu}:=\sum_{i=1}^{n}\left(-\Delta_{x_{i}}^{\Neu}-\frac{\pi}{8}Q_{x_{i}}\right)+\sum_{1\leq i<j\leq n}\left( \ell N^{1-\kappa} \right)^{2}V \left( \ell N^{1-\kappa}(x_{i}-x_{j}) \right)  
    \end{equation}
    acts on $L^{2}_{s}(\Lambda^{n})$. Notice that $\frac{\ell}{\a}=\frac{1}{20}\frac{1}{\sqrt{\rho_{\mu}\a^{3}}}\gg 1$ and $\frac{n \a}{\ell}\leq\frac{1}{100}$, we have 
    \begin{equation}
        n\leq {\rm min}\left\{\frac{1}{48}\frac{\ell}{\a},\frac{\pi}{8C}\frac{\ell}{\a}\log\left(\frac{\ell}{\a}\right)\right\}
    \end{equation}
    for sufficiently small value of $\rho_{\mu}\a^{3}$. Here $C$ is the constant in Proposition~\ref{prop:Neumann completing the square} . Choosing $\mu=\frac{\pi+8\pi\a n/\ell}{2}$ in Proposition~\ref{prop:Neumann completing the square} , we have 
    \begin{align*}
        H_{n,\ell}^{\Neu}&\geq 4\pi \a\frac{n^{2}}{\ell}+ \left( \frac{\pi}{4}-\frac{\pi}{8}-\frac{Cn}{\frac{\ell}{\a}\log \left( \frac{\ell}{\a} \right) } \right)\cN_{+}-C \left( \frac{\a n}{\ell} \right)^{2}S \left( \frac{\ell}{\a} \right)-C\frac{\a n}{\ell}\\
        &\geq4 \pi\a \frac{n^{2}}{\ell}-C \left( S \left( \frac{\ell}{\a}+1 \right)  \right).    
    \end{align*}
    Thus 
    \begin{equation}
        \begin{aligned}
        \left( \cH_{\rho_{\mu}}(\Lambda_{1}) \right)_{n}&\geq4\pi\a\frac{n^{2}}{\ell^{3}}-\frac{C}{\ell^{2}} \left( S \left( \frac{\ell}{\a} \right)+1  \right) -8\pi\a\rho_{\mu} n\\
        &\geq2\pi\frac{\a}{\ell^{3}}\left( \rho_{\mu}\ell^{3}-n \right)^{2}-4\pi\a\rho_{\mu}^{2}\ell^{3}-\frac{C}{\ell^{2}}\left( S \left( \frac{\ell}{\a} \right)+1  \right).   
        \end{aligned}
    \end{equation}
    If $3\rho_{\mu}\ell^{3}\leq n\leq 4\rho_{\mu}\ell^{3}$, we have 
    \begin{equation}\label{smallenergy2}
        \begin{aligned}
        \left( \cH_{\rho_{\mu}}(\Lambda_{1}) \right)_{n}&\geq2\pi\frac{\a}{\ell^{3}}\left( \rho_{\mu}\ell^{3}-n \right)^{2}-4\pi\a\rho_{\mu}^{2}\ell^{3}-\frac{C}{\ell^{2}}\left( S \left( \frac{\ell}{\a} \right)+1  \right)\\
        &\geq4\pi\a\rho_{\mu}^{2}\ell^{3}-\frac{C}{\ell^{2}}\left( S \left( \frac{\ell}{\a} \right)+1  \right)\\
        &=\rho_{\mu}\a \left[ 4\pi K^{-3}\left( \rho_{\mu}\a^{3} \right)^{-1/2}-K^{2}\left( S \left( \frac{1}{K\sqrt{\rho_{\mu}\a^{3}}} \right)+1  \right) \right]\geq 0      
        \end{aligned}
    \end{equation}
    for sufficiently small value of $\rho_{\mu}\a^{3}$.
    
    If $n\leq 4\rho_{\mu}\ell^{3}$, we have 
    \begin{equation}\label{smallenergy3}
        \begin{aligned}
            \left( \cH_{\rho_{\mu}}(\Lambda_{1}) \right)_{n}&\geq-4\pi\a\rho_{\mu}^{2}\ell^{3}-\frac{C}{\ell^{2}}\left( S \left( \frac{\ell}{\a} \right)+1  \right)\\
            &= -4\pi\rho_{\mu}^{2}\a\ell^{3}-C\rho_{\mu}^{2}\a\ell^{3}\left( \rho_{\mu}\a^{3} \right)^{\frac{1}{2}}S \left( \left( \rho_{\mu}\a^{3} \right)^{-1/2}  \right).
        \end{aligned}
    \end{equation}
    From \eqref{smallenergy1}, \eqref{smallenergy2} and \eqref{smallenergy3} we obtain the desired result.
\end{proof}

Now we can conclude the proof of Theorem \ref{thm:BEC}. 
\begin{proof}[Proof of Theorem \ref{thm:BEC}]
    It follows by \eqref{Hamiltonloc} and Lemma \ref{lemmasmallenergy} that 
    \begin{equation}
        \cH_{\rho_{\mu}}\geq \left( 1-C\ell^{4+2\alpha} \right)\left[ -4\pi\a\rho_{\mu}^{2}-C\rho_{\mu}^{2}\a \left( \rho_{\mu}\a^{3} \right)^{1/2}S \left( \frac{1}{\sqrt{\rho_{\mu}\a^{3}}} \right)   \right]  -C\ell^{4+2\alpha}\a\rho_{\mu}\cN.
    \end{equation}
    Here we used the fact that $M^{3}=1/\ell^{3}$. Restricting to the $N$ body section and choosing $\rho_{\mu}=N$ ($\rho_{\mu}\a^{3}\sim N^{3\kappa-2}$ is small, provided $N$ is large enough), we have
    \begin{equation}
        H_{N,\kappa}-\ell^{2+\alpha}\cN_{+}\geq 4\pi\a_{0} N^{1+\kappa}-C\ell^{4+2\alpha}N^{1+\kappa}-CN^{1+\kappa}N^{\frac{3\kappa-2}{2}+\frac{2-3\kappa}{4}},
    \end{equation}
    where we used $\a=\frac{\a_{0}}{N^{1-\kappa}}$ and $S(\cdot)=\sqrt{\cdot}$. Suppose $\Psi_{N}\in L^{2}_{s}(\Lambda^{N})$ is a normalized state such that 
	\begin{equation}
		\left\langle \Psi_{N}, H_{N,\kappa}\Psi_{N} \right\rangle\leq 4\pi\a_{0}N^{1+\kappa}+C_{0}N^{\frac{5\kappa}{2}+\frac{2-3\kappa}{4}}.
	\end{equation}
    Recall $\ell=K^{-1}\frac{1}{\sqrt{N\a}}\sim N^{-\kappa/2}$, we have 
    \begin{equation}\label{esitimate}
    \begin{aligned}
        \frac{\left\langle \Psi_{N},\cN_{+}\Psi_{N} \right\rangle }{N}&\lesssim \ell^{2+\alpha}N^{\kappa}+\ell^{-2-\alpha}N^{\frac{5\kappa-2}{2}+\frac{2-3\kappa}{4}}\\
        &\lesssim N^{-\frac{\alpha\kappa}{2}}+N^{(2+\alpha)\frac{\kappa}{2}+\frac{5\kappa-2}{2}+\frac{2-3\kappa}{4}}.
    \end{aligned}
    \end{equation}
    For $\kappa\in (0,\frac{2}{11})$, one can choose $\alpha>0$ to be small enough such that the right-hand side of \eqref{esitimate} converges to zero as $N$ tends to infinity. This concludes the proof.
\end{proof}

\appendix

\section{The GP Regime with Small Scattering Length}\label{sect:GP}

For the purpose of being self-contained, we include in this appendix a proof of BEC in the GP regime ($\kappa=0$)  for potential $V$ with small scattering length. This together with the proof of Theorem~\ref{thm:BEC} gives a proof of BEC beyond GP for $\kappa \in (0, 2/11)$. 

Let us consider the rescaled Hamiltonian
\begin{equation}
    H_{n,\ell}^{\rm Neu}=\sum_{i=1}^{n}-\Delta_{x_{i}}^{\rm Neu}+\sum_{1\leq  i<j\leq n}\ell^{2}V(\ell(x_{i}-x_{j}))
\end{equation}
acting on $L^{2}_{s}(\Lambda^{n})$, where $\Delta^{\Neu}=\Delta_{\Lambda}^{\Neu}$ is the Laplacian with the Neumann boundary condition. The physical space is $\Lambda$ throughout this section, and for simplicity we omit the subscript $\Lambda$.

We will prove the following result, which readily implies complete BEC when $\kappa=0$.
\begin{proposition}\label{prop:Neumann completing the square}  Let $V$ be as in Theorem \ref{thm:BEC}, whose scattering length $\a_{0}$ satisfies $\frac{\a_{0}n}{\ell}<\frac{1}{24}$. Then we have the operator lower bound 
    \begin{equation}
        H_{n,\ell}^{\rm Neu}\geq 4\pi\a_{0}\frac{n^{2}}{\ell}+\left[\mu-C\frac{n}{\ell}\frac{1}{\log (\ell)}-16\pi\a_{0}\frac{n}{\ell}\right]\mathcal{N}_{+}-C\left(\frac{n}{\ell}\right)^{2}S(\ell)-C\left(\frac{n}{\ell}\right)
    \end{equation}
    on $L_{s}^{2}(\Lambda^{n})$ for $\ell$ large enough. Here, $C>0$ is some constant independent of $n,\ell$. $\mu$ is a positive constant that satisfies $16\pi\a_{0}\frac{n}{\ell}<\mu<\pi-8\pi\a_{0}\frac{n}{\ell}$ and $S(\ell)=\sqrt{\ell}.$
\end{proposition}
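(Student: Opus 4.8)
The plan is to follow the ``completing-the-square'' philosophy used in \cite{nam2022optimal} combined with the Neumann-box bookkeeping of \cite{haberberger2023free}. First I would replace the bare interaction by its renormalized version through the zero-energy scattering solution $\omega$ of $V$. Writing $f_\ell = 1 - \omega(\ell\,\cdot)$ and splitting $\ell^2 V(\ell x) = 2(\text{quadratic form of }\nabla f_\ell\text{-type terms})$, one moves the short-scale singularity of $V$ into the kinetic energy; the price is an effective coupling $4\pi\a_0/\ell$ (per pair) plus controllable error terms of size $O((\a_0 n/\ell)^2)$ coming from the tail of $\omega$ and from the mismatch between the scattering problem on $\R^3$ and on the box $\Lambda$. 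The condition $\a_0 n/\ell < 1/24$ is exactly what makes this substitution legitimate and keeps the Neumann corrector bounded; the $1/\log\ell$ factor in the statement is the signature of the three-dimensional scattering length being defined through a logarithmically-corrected profile on the finite box.

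Next I would expand the renormalized Hamiltonian in the number-of-excitations grading relative to the condensate $u_0 = \1_\Lambda$, i.e. insert $\1 = P_\Lambda + Q_\Lambda$ into each leg of the two-body term and collect terms by how many $Q_\Lambda$'s they carry. The $P_\Lambda^{\otimes 2}$ (``$\mathcal{Q}_0$'') piece produces the main term $4\pi\a_0 n^2/\ell$ up to an error absorbed into $-C(n/\ell)^2 S(\ell) - C(n/\ell)$. The terms with exactly one $Q_\Lambda$ (``$\mathcal{Q}_1$'', cubic in excitations) and the off-diagonal part of the two-$Q_\Lambda$ terms (``$\mathcal{Q}_2^{\mathrm{off}}$'', pair creation/annihilation) are the dangerous ones: I would bound them by a Cauchy--Schwarz/completing-the-square argument against a fraction of the positive diagonal term $\mathcal{Q}_2^{\mathrm{diag}} \sim (\text{const})\,\a_0/\ell\cdot \mathcal N_+$ and against a small fraction of the Neumann gap. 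This is where the spectral gap $\pi$ of $-\Delta^{\Neu}$ on the unit box enters: it supplies the coercive term $\mu\,\mathcal N_+$, and the constraint $16\pi\a_0 n/\ell < \mu < \pi - 8\pi\a_0 n/\ell$ records the requirement that after subtracting the part of the gap used to control the cubic term we still have a positive multiple of $\mathcal N_+$ left, while not exceeding what the gap can provide. The artificial $-\tfrac{\pi}{8}Q$ shift that appears when this proposition is invoked in Lemma~\ref{lemmasmallenergy} is precisely a reallocation of this surplus.

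The main obstacle is the estimate of the cubic term $\mathcal{Q}_1$: unlike the GP regime with a genuinely small potential (where one simply throws it away using smallness of $\|V\|$), here $V$ is order one and one must extract a factor $\a_0 n/\ell$ (smallness coming from diluteness, not from the potential) together with a factor of the kinetic energy, using that at least one particle sits outside the condensate. I would handle this by a Cauchy--Schwarz that pairs a creation operator on a $Q$-momentum with the square root of the diagonal gain and of the localized kinetic form, exactly as in \cite[Section~3]{nam2022optimal}, and then choose the splitting parameters so that the leftover kinetic loss is $O(n/(\ell\log\ell))\,\mathcal N_+$ — the term displayed in the statement. A secondary technical point is controlling the ``$3Q$'' and ``$4Q$'' terms, which I would simply bound below using positivity of $\1_{\Lambda}V\1_{\Lambda}$ after a Cauchy--Schwarz that costs another $O((n/\ell)^2 S(\ell))$; taking $S(\ell) = \sqrt\ell$ is the crude but sufficient choice that matches the error budget, and it is exactly this choice that one would want to sharpen to $S(\ell) = \log\ell$ (cf.\ \cite{boccato2023bose}) to reach larger $\kappa$.
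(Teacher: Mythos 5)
Your high-level framing (the completing-the-square positivity of \cite{nam2022optimal} plus Neumann bookkeeping from \cite{haberberger2023free}) is the right one, but the proposal misses the two ideas that actually carry the paper's proof, and one of your structural claims would produce the wrong constant. First, the paper does not work with $F=1-\omega_\ell$ directly: it builds $F=1-W$ where $W$ is obtained by truncating $\omega_\ell$ with a cutoff $\chi_\lambda$, symmetrizing over the mirror maps $P_z$ so that the kernel becomes diagonal in the Neumann basis $\{\varphi_p\}$, and projecting out the zero mode. Without this construction the quadratic (pair-creation) part cannot be diagonalized and the scattering length cannot be recovered from $\widehat{V_\ell f_\ell}(p)=2|p|^2\widehat{\omega}_{\ell,\lambda}(p)+\dots$ in the Neumann basis; your proposal acknowledges a ``mismatch'' with the box but gives no mechanism to resolve it. Second, your claim that the $P_\Lambda^{\otimes 2}$ piece alone ``produces the main term $4\pi\a_0 n^2/\ell$'' is false: that piece gives $\tfrac{n^2}{2}\int(2F-F^2)V_\ell\approx\tfrac{n^2}{2\ell}\int(1-\omega^2)V$, which exceeds $4\pi\a_0 n^2/\ell$ by an amount of order $n^2/\ell$. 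The correct constant emerges only because the \emph{sharp} Bogoliubov diagonalization of the quadratic Hamiltonian contributes exactly $-\tfrac{n^2}{2}\int V_\ell(F-F^2)$, cancelling this excess. Your plan to control the off-diagonal pair terms by Cauchy--Schwarz against ``a fraction of'' the diagonal term and the Neumann gap would leave an uncancelled deficit of order $n^2/\ell$, far larger than the allowed error $C(n/\ell)^2\sqrt{\ell}=n^2\ell^{-3/2}$; this is where the constraint $16\pi\a_0 n/\ell<\mu<\pi-8\pi\a_0\tfrac{n}{\ell}$ is actually used (to keep $\mathcal{A}>|\mathcal{B}|$ in the Bogoliubov inequality), not to pay for a cubic term.

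Relatedly, the difficulty you single out as the ``main obstacle'' — the cubic term — does not arise: the operator positivity $(\1-P\otimes PF)V_\ell(\1-FP\otimes P)\ge0$ only produces terms with at least two condensate legs, so there are no $3Q$ or $4Q$ terms to discard and no genuinely cubic term to estimate. The terms that \emph{do} require work are purely boundary effects absent in the periodic case: the linear term $a^*(g)a_0^*a_0a_0$ with $g=nQ\int F(\cdot,y)V_\ell(\cdot-y)\,\dd y$ concentrated near $\partial\Lambda$ (this is where the $\tfrac{n}{\ell}\tfrac{1}{\log\ell}\,\cN_+$ loss comes from, via the choice $\varepsilon=\ell^2/(Cn\log\ell)$ and the bound $\|g\|_2\lesssim \tfrac{n}{\ell}\ell^{-1/2}$), and the boundary part $\mathcal{V}$ of the quadratic term, which is absorbed as $\pm\mathcal{V}\le\delta H^{\Neu}_{n,\ell}+C\delta^{-1}(n/\ell)^2$; optimizing $\delta$ against the main term is what produces $S(\ell)=\sqrt{\ell}$, not a $3Q/4Q$ estimate. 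As written, the proposal would need these components supplied before it could be considered a proof.
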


\begin{remark}
Choosing $\ell\propto n$, we obtain complete BEC for the GP regime with small potentials. We only consider the Neumann boundary condition since the periodic case is standard in the GP regime.
\end{remark}

In the following, let us denote $P=\ket{\varphi_0}\!\bra{\varphi_0}$ with $\varphi_0=1\in L^2(\Lambda)$ and $Q=\1-P$. 
The key idea of the proof of Proposition~\ref{prop:Neumann completing the square}  is based on the inequality 
\begin{equation}\label{eq17}
    (\1-P\otimes P F)V_{\ell}(\1-F P\otimes P)\geq 0\, ,
\end{equation} 
where $V_{\ell}$ and $F$ are the multiplication operators by $\ell^{2}V(\ell(x-y)) \ge 0$ and $F(x,y)$ on the two-particle space. The inequality \eqref{eq17} follows the technique in \cite{nam2022optimal}, which is also inspired by earlier ideas in \cite{brietzke2020simple} and \cite{fournais2020energy} on proof of the Lee--Huang--Yang formula. This gives a quick reduction of the full two-body interaction potential to a quadratic contribution in the spirit of Bogoliubov theory \cite{bogoliubov1947theory}.
In this procedure, the choice of the correlation function $F$ plays a central role in the analysis. Unlike the work in \cite{nam2022optimal}, which focused on the finite box with periodic boundary condition and the full domain $\R^3$, here we will deal with the Neumann boundary condition. Consequently, we will construct $F$ by using the symmetrization technique introduced in \cite{haberberger2023free}. In the last step, we will handle the quadratic contribution by adapting the analysis in  \cite{haberberger2023free}. Now let us go to the details.

\subsection{Modified scattering function and Neumann symmetrization} Note that the scattering function $\omega$ in \eqref{eq:scattering-omega} satisfies the scaling property: 
\begin{equation}
	-\Delta \omega_{\ell}=\tfrac{1}{2}V_{\ell}(1-\omega_{\ell})
\end{equation}
with $\omega_{\ell}(x):=\omega(\ell x)$ and $V_{\ell}(x):=\ell^{2}V(\ell x)$. We also introduce the modified scattering solution
\begin{equation}
    \omega_{\ell,\lambda}(x)=\omega_{\ell}(x)\chi_{\lambda}(x),
\end{equation}
where $\chi_{\lambda}(x)=\chi(\lambda^{-1}x)$ with $\chi$ a fixed $C^{\infty}$ radial function satisfying
$$\chi(x)=0\ {\rm for}\ |x|\geq1\ \ {\rm and}\ \ \chi(x)=1\ {\rm for}\ |x|\leq\tfrac{1}{2}\, .$$

When $\ell$ is large enough, we have $2R/\ell<\lambda$. Using the exact formula of $\omega$, we find 
\begin{equation}\label{pointwise esitimate of omega}
	0\leq\omega_{\ell,\lambda}(x)\leq\frac{C\1_{\{|x|\leq\lambda\}}}{|\ell x|+1},
\end{equation}
where $C$ is independent of $\lambda$.
The function $\omega_{\ell,\lambda}$ satisfies the following
\begin{equation}\label{eq10}
    -\Delta \omega_{\ell,\lambda}=\tfrac{1}{2}V_{\ell}(1-\omega_{\ell})-\tfrac{1}{2}\epsilon_{\ell,\lambda},\ \ \ \ \ \tfrac{1}{2}\epsilon_{\ell,\lambda}(x)=\frac{\a_{0}}{\ell}\lambda^{-3}\left(\frac{\chi''}{|\cdot|}\right)(\lambda^{-1}x).
\end{equation}
We will choose $\lambda\leq 1$ be a constant. 

The naive choice $F=\1-\omega_{\ell,\lambda}$ in \eqref{eq17} does not work because of the Neumann boundary condition. To fix this issue, we use the Neumann symmetrization technique from \cite{haberberger2023free}. In this context, for $p \in\pi\N_{0}^{3}$, let us denote 
\begin{equation}
	\phi_{p}(x)=\prod_{i=1}^{3}\phi_{p_{i}}(x_{i}),\ \ \phi_{p_{i}}(x_{i})=\begin{cases}
	\ \ \ \ \ \ \ \ \ \ \ 1,\ & p_{i}=0\\
	\sqrt{2}\cos(p_{i}(x_{i}+1/2)),\ & p_{i}\neq 0	
	\end{cases}.
\end{equation}
The family $\{\phi_{p}\}_{p\in\pi\N_{0}^{3}}$ is an orthonormal basis for $L^{2}(\Lambda)$ satisfying the Neumann boundary conditions. Denoting 
$$\Lambda+z=\{x+z:x\in\Lambda\},\quad z\in\mathbb{Z}^{3}\, ,$$
we define the transformation 
\begin{equation}
P_{z}: \lambda\rightarrow\lambda+z,\ \ (P_{z}(x))_{i}=(-1)^{z_{i}}x_{i}+z_{i},
\end{equation}
which maps a point $x\in\Lambda$ to its mirror point in the box $\Lambda+z$.

Let $f:\mathbb{R}^{3}\rightarrow\mathbb{R}$ be radial and integrable with $\supp(f)\subset \Lambda$. Then for all $p,q\in\pi\mathbb{N}_{0}^{3}$ we have the following useful identity:
\begin{equation}\label{symidentity}
\int_{\Lambda^{2}}\sum_{z\in\mathbb{Z}^{3}}f(P_{z}(x)-y)\varphi_{p}(x)\varphi_{q}(y)\dd x\d y=\delta_{p,q}\widehat{f}(p).
\end{equation}
The proof can be found in \cite[Lemma 3.2]{haberberger2023free}. Here, we define the Fourier transform $f$ by $\hat{f}(p)=\int_{\R^{3}}f(x)e^{-ix\cdot p}\dd x.$

 We define the function $\widetilde{W}:\Lambda^{2}\rightarrow \mathbb{R} $ as 
\begin{equation}\label{defW}
    \widetilde{W}(x,y)=\sum_{z\in\mathbb{Z}^{3}}\omega_{\ell,\lambda}(P_{z}(x)-y).
\end{equation}
The function $\widetilde{W}$ is well-defined due to the finiteness of the sum. We find that 
\begin{equation}
    \widetilde{W}(x,y)=\omega_{\ell,\lambda}(x-y),\quad \forall x,y\in\{z\in\Lambda:{\rm dist}(z,\partial\Lambda)>\lambda\}\, .
\end{equation}
In fact, $\widetilde{W}(x,y)$ is diagonal in the Neumann basis $\{\varphi_{p}\}_{p\in\pi\mathbb{N}_{0}^{3}}$, that is,  
\begin{equation}
    \widetilde{W}(x,y)=\sum_{p\in\pi\mathbb{N}_{0}^{3}}\widehat{\omega}_{\ell,\lambda}(p)\varphi_{p}(x)\varphi_{p}(y)\, .
\end{equation}
Moreover, we remove from the function $\widetilde{W}$ the contribution from the zero mode via the projection $Q=\1-\ket{\varphi_{0}}\!\bra{\varphi_{0}}$ as follows 
\begin{equation}\label{eq16}
    W(x,y)=(Q^{\otimes 2}\widetilde{W})(x,y)=\sum_{p\in\pi\mathbb{N}_{0}^{3}\setminus\{0\} }\widehat{\omega}_{\ell,\lambda}(p)\varphi_{p}(x)\varphi_{p}(y)\,.
\end{equation}

Note that $|P_{z}(x)-y\vert\geq|x-y\vert$ for all $x,y\in\Lambda$. Together with \eqref{pointwise esitimate of omega} and the finiteness of the sum yields
\begin{equation}\label{eq34}
    |\widetilde{W}(x,y)\vert\leq\frac{C\1_{|x-y|\leq\lambda}}{1+\ell|x-y|}.
\end{equation}
Moreover, \eqref{pointwise esitimate of omega} yields 
\begin{equation}\label{eq35}
    \widetilde{W}(x,y)-W(x,y)=\widehat{\omega}_{\ell,\lambda}(0)\leq C\frac{\lambda^{2}}{\ell}.
\end{equation}

Finally, we take $F(x,y)=1-W(x,y)$, which is the desired modified correlation function. We can recover the scattering length by $F$ as follows:

\begin{lemma}[Boundary effects]\label{l2}
    For the function 
    \begin{equation}
        h(x)=\int_{\Lambda}nV_{\ell}(x-y)F(x,y)\dd y-8\pi\a_{0}\frac{n}{\ell}
    \end{equation}
    defined on $\Lambda$, {there exists $C_p>0$, independent of $n$ and $\ell$,} such that the following holds
    $$\|h\Vert_{1}\leq C\frac{n}{\ell}\frac{\log(\ell)}{\ell} \quad \text{ and } \quad \|h\Vert_{p}\leq C\frac{n}{\ell}\ell^{-1/p},\ \forall p\in(1,\infty]\,.$$
    Consequently, we have that
    $$\left|\int_{\Lambda^{2}}nV_{\ell}(x-y)F(x,y)\dd x\d y-8\pi\a_{0}\frac{n}{\ell}\right|\leq C\frac{n}{\ell}\frac{\log (\ell)}{\ell}\,.$$ 
\end{lemma}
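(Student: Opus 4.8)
The plan is to prove the estimate on $h$ by carefully tracking how the function $F = 1 - W$ differs from the ``free-space'' correlation profile $1 - \omega_{\ell,\lambda}$, and then invoke the zero-energy scattering equation to extract $8\pi\a_0/\ell$.

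\textbf{Step 1: Replace $W$ by $\widetilde{W}$ and then by $\omega_{\ell,\lambda}$.} First I would write $F(x,y) = 1 - W(x,y) = 1 - \widetilde{W}(x,y) + (\widetilde{W}(x,y) - W(x,y))$. By \eqref{eq35}, the last term is the constant $\widehat{\omega}_{\ell,\lambda}(0) \le C\lambda^2/\ell$, so its contribution to $h(x)$ is $\widehat{\omega}_{\ell,\lambda}(0)\int_\Lambda n V_\ell(x-y)\,\d y$, which is pointwise bounded by $C (n/\ell)\cdot (\lambda^2/\ell)\cdot \|V\|_1$ (using $\int V_\ell = \ell^{-1}\int V$) and supported where $\mathrm{dist}(x,\Lambda) $ is not an issue; this is already an $L^\infty$ bound of the right order $C(n/\ell)\ell^{-1}$. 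Next, on the set where both $x$ and the relevant $y$ (with $|x-y|<R/\ell$, forced by $\supp V_\ell$) lie at distance $>\lambda$ from $\partial\Lambda$, we have $\widetilde{W}(x,y) = \omega_{\ell,\lambda}(x-y) = \omega_\ell(x-y)$ since $\lambda > R/\ell$ for $\ell$ large; near the boundary the extra mirror terms are controlled by \eqref{eq34}. So up to boundary layers of width $O(\lambda)$ (volume $O(\lambda)$) we may replace $F$ by $1 - \omega_\ell$.

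\textbf{Step 2: Use the scattering equation to identify the main term.} On the bulk, $\int_\Lambda n V_\ell(x-y)(1-\omega_\ell(x-y))\,\d y$ equals $n\int_{\R^3} V_\ell(1-\omega_\ell) = n\cdot \ell^{-1}\int_{\R^3}V(1-\omega) = 8\pi\a_0 (n/\ell)$ by the definition $8\pi\a_0 = \int V(1-\omega)$, provided the ball $\{|x-y|<R/\ell\}$ is entirely inside $\Lambda$, i.e., $\mathrm{dist}(x,\partial\Lambda) > R/\ell$. Hence $h(x) = 0$ exactly in the deep bulk, and $h$ is supported in a boundary shell of width $O(\lambda)$. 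On that shell, $|h(x)| \le \int_\Lambda n V_\ell(x-y)|F(x,y)|\,\d y + 8\pi\a_0 n/\ell \le C n/\ell$ using $0 \le F \le 1 + C\lambda^2/\ell \le 2$ and $\int V_\ell = C/\ell$... wait, that gives $Cn/\ell^2$; more carefully $n\int V_\ell \cdot \|F\|_\infty \le (n/\ell)\|V\|_1 \cdot 2$, which is $O(n/\ell)$ — good. Therefore $\|h\|_p^p \le (C n/\ell)^p \cdot |\text{boundary shell}| \le (Cn/\ell)^p \cdot C\lambda$, giving $\|h\|_p \le C(n/\ell)\lambda^{1/p}$. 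Choosing $\lambda$ to be a constant, or more precisely taking into account the sharper pointwise decay, refines $\lambda^{1/p}$ to $\ell^{-1/p}$: indeed near the boundary one should not just bound $F$ by $2$ but integrate the true profile, and the relevant length scale over which $h$ is genuinely of size $\sim (n/\ell)$ rather than decaying is $O(R/\ell)$, not $O(\lambda)$ — the region $\lambda > \mathrm{dist}(x,\partial\Lambda) > R/\ell$ contributes only via the small mirror tails in \eqref{eq34} and the constant shift from \eqref{eq35}.

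\textbf{Step 3: The logarithm in the $L^1$ bound, and the consequence.} For the $L^1$ estimate the naive bound $\|h\|_1 \le \|h\|_\infty \cdot |\mathrm{supp}\, h| \le C(n/\ell)\cdot (R/\ell) = C(n/\ell)\ell^{-1}$ is already of the claimed form up to the logarithm; the $\log(\ell)$ factor enters because the correction coming from the cutoff error $\epsilon_{\ell,\lambda}$ in \eqref{eq10} and from the difference between $\omega_{\ell,\lambda}$ and $\omega_\ell$ carries a $\log$ from integrating the $1/(1+\ell|x-y|)$ tail of \eqref{pointwise esitimate of omega} against $V_\ell$ up to scale $\lambda$, i.e. $\int_{|x-y| \le \lambda}\ell^2 V(\ell(x-y))\tfrac{1}{1+\ell|x-y|} \sim \ell^{-1}\log(\ell\lambda)$; hence $\|h\|_1 \le C(n/\ell)\ell^{-1}\log\ell$. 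The final ``Consequently'' statement is then immediate: $\big|\int_{\Lambda^2} n V_\ell(x-y)F(x,y)\,\d x\,\d y - 8\pi\a_0 (n/\ell)\big| = \big|\int_\Lambda h(x)\,\d x\big| \le \|h\|_1 \le C(n/\ell)\ell^{-1}\log\ell$.

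\textbf{Main obstacle.} The delicate point is the boundary analysis: one must show that the Neumann symmetrization $\widetilde{W}$ reproduces $\omega_\ell$ exactly in the bulk (so that the scattering equation can be applied cleanly) while the mirror-image tails, the zero-mode removal $\widetilde{W} - W$, and the cutoff-induced error $\epsilon_{\ell,\lambda}$ all contribute only lower-order terms of size $O((n/\ell)\ell^{-1})$ (with a harmless $\log$ in $L^1$). Getting the sharp $\ell^{-1/p}$ decay in $L^p$, rather than a $\lambda$-dependent power, requires being careful that the region of width $\lambda$ near the boundary where $\omega_{\ell,\lambda}\ne\omega_\ell$ is precisely where $F$ is still close to $1$ and $V_\ell$ forces $|x-y|\lesssim R/\ell \ll \lambda$, so in fact only an $O(R/\ell)$-shell genuinely matters and the estimates \eqref{eq34}--\eqref{eq35} control the rest; I would organize the proof around splitting $\Lambda$ into $\{\mathrm{dist}(\cdot,\partial\Lambda)> R/\ell\}$, where $h\equiv 0$, and its complement, and then further bounding the contributions term by term using the pointwise bounds already established.
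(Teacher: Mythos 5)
The paper does not actually write out a proof here; it defers entirely to \cite[Lemma 3.3]{haberberger2023free}. Your self-contained sketch follows the same general strategy as that reference (split $F=1-W$ into the free term $1-\omega_{\ell}$, the zero-mode constant $\widehat{\omega}_{\ell,\lambda}(0)$, and the mirror images; use $\int_{\R^3}V_\ell(1-\omega_\ell)=8\pi\a_0/\ell$ in the bulk; estimate boundary layers), and Steps 1--2 are essentially sound. However, Step 3 contains a genuine error in the bookkeeping of where the logarithm comes from, and it dismisses precisely the term that produces it. First, $\epsilon_{\ell,\lambda}$ plays no role in $h$ at all: $h$ is defined directly through $V_\ell F$, and the only scattering input is the identity $n\int_{\R^3}V_\ell(1-\omega_\ell)=8\pi\a_0 n/\ell$, which is just the definition of $\a_0$ after scaling. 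Likewise the difference between $\omega_{\ell,\lambda}$ and $\omega_\ell$ is invisible on $\supp V_\ell$, since $\chi_\lambda\equiv 1$ on $\{|x|\le 2R/\ell\}\subset\{|x|\le\lambda/2\}$. Moreover, the integral you invoke, $\int_{|x-y|\le\lambda}\ell^2V(\ell(x-y))(1+\ell|x-y|)^{-1}\dd y$, is simply $O(\ell^{-1})$ with no logarithm, because $\supp V\subset B_R$ forces $\ell|x-y|\le R$ there.

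The true source of both the $\log(\ell)$ in $L^1$ and the $p$-dependent constant in $L^p$ is the mirror-image contribution $-n\int_\Lambda V_\ell(x-y)\sum_{z\neq 0}\omega_{\ell,\lambda}(P_z(x)-y)\dd y$, which you assert ``contributes only via the small mirror tails'' without estimating it. Since $|P_z(x)-y|\gtrsim \mathrm{dist}(x,\partial\Lambda)$ whenever $|x-y|\le R/\ell$, the bound \eqref{eq34} gives for this term the pointwise estimate $C\frac{n}{\ell}\big(1+\ell\,\mathrm{dist}(x,\partial\Lambda)\big)^{-1}$, supported on the whole $O(\lambda)$-neighborhood of $\partial\Lambda$, not on an $O(R/\ell)$-shell. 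Integrating, $\int_\Lambda\big(1+\ell\,\mathrm{dist}(x,\partial\Lambda)\big)^{-1}\dd x\sim \ell^{-1}\log\ell$ (the same computation the paper uses later for the term $(\mathrm{IV})$), while $\int_\Lambda\big(1+\ell\,\mathrm{dist}(x,\partial\Lambda)\big)^{-p}\dd x\le C_p\ell^{-1}$ for $p>1$, with $C_p\to\infty$ as $p\downarrow 1$ — which is exactly why the lemma states an $\ell^{-1/p}$ bound for $p>1$ but only $\ell^{-1}\log\ell$ for $p=1$. Your ``naive'' $L^1$ bound $\|h\|_\infty\cdot|\supp h|\le C(n/\ell)\ell^{-1}$ is therefore based on a false premise about the support of $h$ and is in fact stronger than what is true. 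The repair is straightforward: keep your decomposition, but replace Step 3 by the explicit pointwise bound on the mirror term above and integrate it in $x$; the remaining pieces (the constant shift, handled in your Step 1, and the $R/\ell$-shell deficit in the scattering identity) are bounded as you indicate.
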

\begin{proof}
    The proof is exactly the same as in \cite[Lemma 3.3]{haberberger2023free}, with $n+K(x,y)=nF(x,y)$ in our notations.
\end{proof} 

\subsection{Proof of Proposition~\ref{prop:Neumann completing the square} }
Let 
% $a_{p}:=a(\phi_{p}), a_{p}^{*}:=a^{*}(\phi_{p})$
$a^{\ast}_p$ and $a_p$
be the usual creation and annihilation operators acting on the bosonic Fock space $\cF_s :=\C\oplus\bigoplus^\infty_{n=1}L^2_s(\Lambda^{n})$ associated with the basis vector $\varphi_p$. They satisfy the canonical commutation relations: $[a_p, a^\ast_q] = \delta_{p, q}, [a_p, a_q] = [a^\ast_p, a^\ast_q] =0$, for all $p, q \in \pi \N_0^3$. The particle number and the excitation number operators are defined, respectively, by 
\begin{equation}
	\cN=\sum_{p\in\pi\N_{0}^{3}}a_{p}^{*}a_{p}\quad \text{ and } \quad \cN_{+}=\sum_{\substack{p \in\pi\N_{0}^{3}\\ p\neq 0}}a_{p}^{*}a_{p}\,.
\end{equation}
Using these operators, we can rewrite $H_{n,\ell}^{\rm Neu}$ as a restriction of an operator on Fock space to the $n$-particle space. Expanding \eqref{eq17}) leads to % {\color{red}(maybe need more explanation)} 

\begin{align*}
    H_{n,\ell}^{\rm Neu}\geq&\, \left\{\left(\frac{1}{2}\int_{\Lambda^{2}}\left(2F(x,y)-F(x,y)^{2}\right)V_{\ell}(x-y)\dd x\d y\right) a_{0}^{*}a_{0}^{*}a_{0}a_{0}\right\}\\
      &+\left\{a^{*}\left[Q\left(\int_{\Lambda}F(\cdot,y)V_{\ell}(\cdot-y)\dd y\right)\right]a_{0}^{*}a_{0}a_{0}+{\rm h.c.}\right\}\\
      &+\left\{\sum_{p}|p|^{2}a_{p}^{*}a_{p}+\frac{1}{2}\sum_{p,q\neq0}\left(\left\langle \varphi_{p}\otimes\varphi_{q},\widetilde{K}(x,y) \right\rangle\frac{1}{n}a_{p}^{*}a_{q}^{*}a_{0}a_{0}+{\rm h.c.} \right)\right\}\\
      :=&\, \mathcal{H}_{0}+\mathcal{H}_{1}+\mathcal{H}_{2},
\end{align*}
where $\widetilde{K}(x,y)=nV_{\ell}(x-y)F(x,y)$. Firstly, we focus on $\mathcal{H}_{0}$ and $\mathcal{H}_{1}$.

\begin{lemma}\label{l3}
    Suppose $\lambda$ is small enough, we have 
    \begin{multline}
                \cH_{0}+\cH_{1}\geq\frac{n^{2}}{2}\int_{\Lambda^{2}}\left(2F-F^{2}\right)V_{\ell}\dd x \d y-\left(16\pi\a_{0}\frac{n}{\ell}+\frac{n}{\ell}\frac{C}{\log(\ell)}\right)\mathcal{N}_{+}\\
                -C\left(\frac{n}{\ell}\right)^{2}\log(\ell)-C\left(\frac{n}{\ell}\right).
    \end{multline}
\end{lemma}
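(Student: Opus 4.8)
The plan is to bound $\cH_0$ and $\cH_1$ separately, extracting the advertised main term $\tfrac{n^{2}}{2}\int_{\Lambda^{2}}(2F-F^{2})V_{\ell}$ from $\cH_0$ essentially exactly and reducing every other contribution to $\cN_{+}$ and $c$-numbers, without ever invoking the kinetic energy (which is bundled into $\cH_2$).

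For $\cH_0$, I would use $a_{0}^{*}a_{0}^{*}a_{0}a_{0}=\cN_0(\cN_0-1)$ with $\cN_0:=a_{0}^{*}a_{0}=n-\cN_{+}$ on the $n$-particle sector, so that $\cH_0=c_{0}\bigl(n(n-1)-(2n-1)\cN_{+}+\cN_{+}^{2}\bigr)$ with $c_{0}:=\tfrac12\int_{\Lambda^{2}}(2F-F^{2})V_{\ell}$. The term $c_{0}\,n(n-1)$ already produces $\tfrac{n^{2}}{2}\int_{\Lambda^{2}}(2F-F^{2})V_{\ell}$ up to $\tfrac{n}{2}\int_{\Lambda^{2}}(2F-F^{2})V_{\ell}$, which is $O(n/\ell)$: on $\supp V_{\ell}$ one has $|x-y|\le R/\ell\ll\lambda$, hence $F=1-W$ with $W(x,y)=\omega_{\ell}(x-y)-\widehat{\omega}_{\ell,\lambda}(0)$ off a boundary layer of measure $O(\ell^{-1})$, and since $0\le\omega\le1$ and $\widehat{\omega}_{\ell,\lambda}(0)=O(\lambda^{2}/\ell)$ we get $0\le 2F-F^{2}\le 2F$ there, while $\int_{\Lambda^{2}}V_{\ell}\lesssim\ell^{-1}$. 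The same sign and size information, combined with Lemma~\ref{l2}, yields $-C\ell^{-2}\le c_{0}\le\int_{\Lambda^{2}}V_{\ell}F\le 8\pi\a_{0}\ell^{-1}+C\ell^{-2}\log(\ell)$, so $c_{0}(2n-1)\le 16\pi\a_{0}n/\ell+Cn\ell^{-2}\log(\ell)\le 16\pi\a_{0}n/\ell+Cn/(\ell\log(\ell))$ for $\ell$ large; dropping $c_{0}\cN_{+}^{2}\ge0$ when $c_{0}\ge0$ and bounding it below by $-C\ell^{-2}n\,\cN_{+}$ otherwise, I would obtain $\cH_0\ge\tfrac{n^{2}}{2}\int_{\Lambda^{2}}(2F-F^{2})V_{\ell}-\bigl(16\pi\a_{0}\tfrac{n}{\ell}+\tfrac{Cn}{\ell\log(\ell)}\bigr)\cN_{+}-C\tfrac{n}{\ell}$.

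For $\cH_1$, set $g(x):=\int_{\Lambda}F(x,y)V_{\ell}(x-y)\dd y$, so $\cH_1=a^{*}[Qg]\,a_{0}^{*}a_{0}a_{0}+{\rm h.c.}$ The crucial input is the smallness of $Qg$: since $g=h/n+8\pi\a_{0}/\ell$ with $h$ as in Lemma~\ref{l2} and $Q$ annihilates constants, $\norm{Qg}_{L^{2}}=\tfrac1n\norm{Qh}_{L^{2}}\le\tfrac1n\norm{h}_{L^{2}}\le C\ell^{-3/2}$. On the $n$-particle sector $a_{0}^{*}a_{0}a_{0}=(n-1-\cN_{+})a_{0}$, so $\cH_1=(n-1)\bigl(a^{*}[Qg]a_{0}+{\rm h.c.}\bigr)-\bigl(a^{*}[Qg]\cN_{+}a_{0}+{\rm h.c.}\bigr)$. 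For the leading part I would use Cauchy--Schwarz $\pm(a^{*}[Qg]a_{0}+{\rm h.c.})\ge-\delta\,a^{*}[Qg]a[Qg]-\delta^{-1}a_{0}^{*}a_{0}$ together with $a^{*}[Qg]a[Qg]\le\norm{Qg}_{L^{2}}^{2}\cN_{+}$ and $a_{0}^{*}a_{0}\le n$, and optimize $\delta\sim\ell^{2}/\log(\ell)$ to get $-\tfrac{Cn}{\ell\log(\ell)}\cN_{+}-C(n/\ell)^{2}\log(\ell)$. For the cubic part I would write $a^{*}[Qg]\cN_{+}a_{0}=(a^{*}[Qg]\cN_{+}^{1/2})(a_{0}\cN_{+}^{1/2})$ and apply $\pm(YZ+Z^{*}Y^{*})\ge-\eta\,YY^{*}-\eta^{-1}Z^{*}Z$ with $YY^{*}=a^{*}[Qg]\cN_{+}a[Qg]\le n\norm{Qg}_{L^{2}}^{2}\cN_{+}$ and $Z^{*}Z=\cN_{+}^{1/2}\cN_0\cN_{+}^{1/2}\le n\cN_{+}$, optimizing $\eta\sim\norm{Qg}_{L^{2}}^{-1}$ to bound it below by $-Cn\norm{Qg}_{L^{2}}\cN_{+}\ge-Cn\ell^{-3/2}\cN_{+}\ge-\tfrac{Cn}{\ell\log(\ell)}\cN_{+}$ for $\ell$ large. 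Summing, $\cH_1\ge-\tfrac{Cn}{\ell\log(\ell)}\cN_{+}-C(n/\ell)^{2}\log(\ell)$; adding the bound for $\cH_0$ gives the lemma.

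The main obstacle is exactly the linear term $\cH_1$. Contrary to the usual completing-the-square arguments, the kinetic energy $\sum_{p}|p|^{2}a_{p}^{*}a_{p}$ is unavailable here (it belongs to $\cH_2$), so $\cH_1$ has to be absorbed into $\cN_{+}$ and constants alone; this works only because $\norm{Qg}_{L^{2}}\lesssim\ell^{-3/2}$, which is itself the scattering-length cancellation of Lemma~\ref{l2} --- the fact that $\int_{\Lambda}F(\cdot,y)V_{\ell}(\cdot-y)\dd y$ is, to leading order, the constant $8\pi\a_{0}/\ell$. Keeping track of the cubic remainder $a^{*}[Qg]\cN_{+}a_{0}$ and of the near-boundary part of $c_{0}$ is routine, but must be done carefully enough to keep the final $\cN_{+}$-coefficient strictly below $Cn/(\ell\log(\ell))$.
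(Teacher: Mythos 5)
Your proposal is correct and follows essentially the same route as the paper: expand $\cH_{0}$ via $a_{0}^{*}a_{0}=n-\cN_{+}$ and control the resulting $\cN_{+}$-coefficient with Lemma~\ref{l2}, then absorb $\cH_{1}$ into $\cN_{+}$ and constants by weighted Cauchy--Schwarz, using precisely the scattering-length cancellation $\|Qg\|_{L^{2}}\lesssim \ell^{-3/2}$ from Lemma~\ref{l2} and the same choice $\delta\sim\ell^{2}/\log(\ell)$. The only (immaterial) deviations are that you treat the sign of $c_{0}$ by cases where the paper first shows $\int(2F-F^{2})V_{\ell}\geq 0$ for $\lambda$ small, and you split the cubic term symmetrically with $\cN_{+}^{1/2}$; both yield the same bounds.
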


\begin{proof}
    Since $2R/\ell<\lambda$, we find that $\widetilde{W}(x,y)=\omega_{\ell,\lambda}(x-y)$ whenever ${\rm dist}(x,\partial \Lambda)>2\lambda$. Using the finiteness of the sum in \eqref{defW}, we obtain a universal constant $C$ such that
    \begin{align*}
    \int_{\Lambda^{2}}\left(2F-F^{2}\right)V_{\ell}&=\int_{\Lambda^{2}}\left(1-W^{2}\right)V_{\ell}\\
    &\geq\frac{1}{2\ell}\int_{\R^{3}}(2f-f^{2})V-\frac{C\lambda}{\ell}\int_{\R^{3}}(1+\omega+\omega^{2})V.
    \end{align*}
    Choosing $\lambda$ to be small enough, we conclude that
    $$\int_{\Lambda^{2}}\left(2F-F^{2}\right)V_{\ell}=\int_{\Lambda^{2}}\left(1-W^{2}\right)V_{\ell}\geq0.$$
    We have 
    \begin{equation}\label{eq32}
    \begin{aligned}
        \mathcal{H}_{0}=&\, \left(\frac{1}{2}\int_{\Lambda^{2}}\left(2F(x,y)-F(x,y)^{2}\right)V_{\ell}(x-y)\dd x\d y\right) \left(n-\mathcal{N}_{+}\right)(n-\mathcal{N}_{+}-1)\\
        \geq&\, \frac{n^{2}-n}{2}\int_{\Lambda^{2}}\left(2F(x,y)-F(x,y)^{2}\right)V_{\ell}(x-y)\dd x\d y\\
        &\, -2n\left(\int_{\Lambda^{2}}F(x,y)V_{\ell}(x-y)\dd x\d y\right)\mathcal{N}_{+}\\
        \geq&\, \frac{n^{2}}{2}\int_{\Lambda^{2}}\left(2F(x,y)-F(x,y)^{2}\right)V_{\ell}(x-y)\dd x\d y\\
        &\, -\left(16\pi\a_{0}\frac{n}{\ell}+C\frac{n}{\ell}\frac{\log(\ell)}{\ell}\right)\mathcal{N}_{+}-C\left(\frac{n}{\ell}\right)\,,
    \end{aligned}
    \end{equation}
    where we have used Lemma~\ref{l2} in the last inequality.
    
    Next, we consider $\mathcal{H}_{1}$. Leaving $g=nQ\left(\int_{\Lambda}F(\cdot,y)V_{\ell}(\cdot-y)\dd y\right)$, we have $g=h-\int_{\Lambda}h$. Using Lemma~\ref{l2} again, we have 
    \begin{equation}
        \|g\Vert_{2}\leq\|h\Vert_{2}+\|g-h\Vert_{2}\leq\|h\Vert_{2}+\|h\Vert_{1}\leq C\frac{n}{\ell}\frac{1}{\sqrt{\ell}}.
    \end{equation}
    Therefore, by the Cauchy--Schwarz inequality 
    \begin{align*}
        \pm \left(a^{*}(g)a_{0}+{\rm h.c.}\right)&\leq \varepsilon a^{*}(g)a(g)+\frac{1}{\varepsilon}a_{0}^{*}a_{0}\leq\varepsilon\|g\Vert_{2}^{2}\mathcal{N}_{+}+\frac{n}{\varepsilon}\\
        &\leq\varepsilon\left(\frac{n}{\ell}\right)^{2}\frac{C}{\ell}\mathcal{N}_{+}+\frac{n}{\varepsilon}=\frac{n}{\ell}\frac{1}{\log(\ell)}\mathcal{N}_{+}+C\left(\frac{n}{\ell}\right)^{2}\log(\ell),
    \end{align*}
    where we chose $\varepsilon=\frac{\ell^{2}}{Cn\log(\ell)}.$ Similarly, 
    \begin{align*}
        \pm\left(a^{*}(g/n)\mathcal{N}_{+}a_{0}+{\rm h.c.}\right)&\leq\frac{\varepsilon}{n^{2}}\|g\Vert_{2}^{2}\mathcal{N}_{+}+\frac{n^{2}}{\varepsilon}\mathcal{N}_{+}\\
        &\leq\left(\frac{C\varepsilon}{\ell^{3}}+\frac{n^{2}}{\varepsilon}\right)\mathcal{N}_{+}=C\frac{n}{\ell}\frac{1}{\sqrt{\ell}}\mathcal{N}_{+}.
    \end{align*}
    Therefore,
    \begin{equation}
        \mathcal{H}_{1}=\left(a^{*}(g)a_{0}-a^{*}(g/n)\mathcal{N}_{+}a_{0}+{\rm h.c.}\right)\geq-\frac{n}{\ell}\frac{C}{\log(\ell)}\mathcal{N}_{+}-C\left(\frac{n}{\ell}\right)^{2}\log(\ell)
    \end{equation}
    which combined with \eqref{eq32} gives the desired bound.
\end{proof}

To deal with $\mathcal{H}_{2}$, first recall the definition of $W$ in \eqref{eq16} and the modified scattering equation \eqref{eq10}, we can rewrite $\widetilde{K}$ as follows:
\begin{equation}\label{eq33}
    \begin{aligned}
        \widetilde{K}(x,y)=&\, n\,V_{\ell}\left( 1-W \right) \\
        =&\, \sum_{z\in\mathbb{Z}^{3}}n\,(-2\Delta)\omega_{\ell,\lambda}(P_{z}(x)-y)\\
        &+\sum_{z\in\mathbb{Z}^{3}}n\,\epsilon_{\ell,\lambda}(P_{z}(x)-y)\\
        &+n\sum_{z\neq 0}\left[-V_{\ell}(1-\omega_{\ell})(P_{z}(x)-y)-V_{\ell}(x-y)\omega_{\ell,\lambda}(P_{z}(x)-y)\right]\\
        &+n\,\widehat{\omega}_{\ell,\lambda}(0)V_{\ell}(x-y)\\
        :=&\, K_{\rm m}(x,y)+2\,\widetilde{Q}_{2}^{\epsilon}(x,y)
        +2\,\widetilde{Q}_{2}^{{\rm bc}}(x,y)+n\,\widehat{\omega}_{\ell,\lambda}(0)V_{\ell}(x-y).
    \end{aligned}
\end{equation}

Using the identity \eqref{symidentity} and the modified scattering equation \eqref{eq10}, we see that 

\begin{equation}\label{eq1*}
\begin{aligned}
K_{\rm m}(x,y)+2\title{Q}_{2}^{\epsilon}(x,y)&=\sum_{z\in\Z^{3}}n\left(-2\Delta\omega_{\ell,\lambda}+\epsilon_{\ell,\lambda}\right)\left(P_{z}(x)-y\right)\\
&=\sum_{z\in\Z^{3}}n\left(V_{\ell}(1-\omega_{\ell,\lambda})\right)\left(P_{z}(x)-y\right)\\
&=\sum_{z\in\Z^{3}}n\left(V_{\ell}f_{\ell}\right)\left(P_{z}(x)-y\right)=\sum_{p}n\widehat{V_{\ell}f_{\ell}}(p)\phi_{p}(x)\phi_{p}(y).
\end{aligned}
\end{equation}
Inserting \eqref{eq33} and \eqref{eq1*} in the definition of $\cH_{2}$, we conclude that
\begin{align*}
    \mathcal{H}_{2}-\mu\mathcal{N}_{+}=&\, \left\{\sum_{p,q\neq 0}\left(\left\langle\varphi_{p}\otimes\varphi_{q}\left| \widetilde{Q}_{2}^{{\rm bc}}+\frac{n}{2}\widehat{\omega}_{\ell,\lambda}(0)V_{\ell}\right.\right\rangle\frac{1}{n}a_{p}^{*}a_{q}^{*}a_{0}a_{0}+{\rm h.c.}\right)\right\}\\
    &+\left\{\sum_{p\neq 0}(|p|^{2}-\mu)a_{p}^{*}a_{p}+\frac{1}{2}\sum_{p\neq 0}n\widehat{V_{\ell}f_{\ell}}(p)\left(\frac{1}{n}a_{p}^{*}a_{p}^{*}a_{0}a_{0}+{\rm h.c.}\right)\right\}\\
    :=&\, \mathcal{V}+\mathbb{H}.
\end{align*}
We remark here that $\mathcal{V}$ contains the effect of the Neumann boundary condition. $\mathbb{H}$ is the Bogoliubov quadratic Hamiltonian which can be diagonalized directly. To be precise, we have the following lemma:

\begin{lemma}[Quadratic diagonalization]\label{l4}
    Suppose $16\pi\a_{0}\frac{n}{\ell}<\mu<\pi-8\pi\a_{0}\frac{n}{\ell}$. Then
    $$\mathbb{H}\geq-\frac{n^{2}}{2}\int_{\Lambda^{2}}V_{\ell}(F-F^{2})-C \left( \frac{n}{\ell} \right)^{2}\log(\ell) $$
    with $C$ depending only on $\lambda$ and $\a_{0}$.
\end{lemma}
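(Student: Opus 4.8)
The plan is to prove Lemma~\ref{l4} by explicitly diagonalizing the Bogoliubov quadratic Hamiltonian $\mathbb H$ via a (complex) Bogoliubov transformation, mode by mode. Since $\mathbb H=\sum_{p\neq 0}(|p|^2-\mu)a_p^\ast a_p+\tfrac12\sum_{p\neq 0}n\widehat{V_\ell f_\ell}(p)\bigl(\tfrac1n a_p^\ast a_p^\ast a_0 a_0+\mathrm{h.c.}\bigr)$ acts on the $n$-particle space where $a_0^\ast a_0 = n-\cN_+$, I would first replace $a_0 a_0/n$ and $a_0^\ast a_0^\ast/n$ by their approximate values, controlling the error $\|(a_0 a_0/n - 1)\Psi\|$ by $C\cN_+/n$ (standard, since $a_0^\ast a_0^\ast a_0 a_0 = (n-\cN_+)(n-\cN_+-1)$), paying a price that is absorbed into the $\mathcal N_+$-terms already present in Proposition~\ref{prop:Neumann completing the square} and into the stated $(n/\ell)^2\log\ell$ error. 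Then for each fixed $p\neq 0$ one is left with a one-mode (or $\{p,-p\}$ two-mode, but here $-p\notin\pi\N_0^3$ so it is genuinely one-mode in the cosine basis) quadratic form $(|p|^2-\mu)a_p^\ast a_p+\tfrac12 n\widehat{V_\ell f_\ell}(p)(a_p^\ast a_p^\ast + a_p a_p)$, whose infimum over states is the elementary expression $\tfrac12\bigl(\sqrt{(|p|^2-\mu)^2 - (n\widehat{V_\ell f_\ell}(p))^2} - (|p|^2-\mu)\bigr)$, valid precisely when $|p|^2-\mu \ge |n\widehat{V_\ell f_\ell}(p)|$.

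Next I would verify the spectral condition. For $p\neq 0$ one has $|p|^2\ge \pi^2$, while $\mu < \pi$, so $|p|^2-\mu > \pi^2-\pi>0$; and $|n\widehat{V_\ell f_\ell}(p)| \le n\widehat{V_\ell f_\ell}(0) = n\cdot \tfrac{8\pi\a_0}{\ell}(1+o(1))$ since $\widehat{V_\ell f_\ell}\ge 0$ is maximized at $p=0$ (the scattering identity $\widehat{V_\ell f_\ell}(0)=\int V_\ell f_\ell = 8\pi\a_0/\ell$), which under the hypothesis $\a_0 n/\ell<1/24$ is far smaller than $\pi^2-\pi$. Hence every mode is in the stable regime and using $\sqrt{A^2-B^2}-A \ge -B^2/A \ge -B^2/(\pi^2-\mu)$ I get
\begin{equation*}
\mathbb H \ge -\frac12\sum_{p\neq 0}\frac{(n\widehat{V_\ell f_\ell}(p))^2}{|p|^2-\mu} \ge -C\, n^2\sum_{p\neq 0}\frac{\widehat{V_\ell f_\ell}(p)^2}{|p|^2}.
\end{equation*}
It remains to identify this sum with $\tfrac12 n^2\int_{\Lambda^2}V_\ell(F-F^2)$ up to an error $C(n/\ell)^2\log\ell$. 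To do this I would use the second-quantized rewriting: completing the square in the diagonalization, the optimal shift is $a_p\mapsto a_p + \tfrac{n\widehat{V_\ell f_\ell}(p)}{|p|^2-\mu}\cdot(\text{something})$, and the constant produced is exactly $-\tfrac12\sum_{p\neq 0} n\widehat{V_\ell f_\ell}(p)\,\widehat{g_p}$ where $g_p$ solves the corresponding modified scattering-type equation; tracking this, the leading constant equals $\tfrac{n^2}{2}\sum_{p\neq 0}\widehat{V_\ell f_\ell}(p)\widehat{\omega_{\ell,\lambda}}(p) = \tfrac{n^2}{2}\int_{\Lambda^2}V_\ell f_\ell \omega_{\ell,\lambda}$ (by Parseval with the basis $\phi_p$ and identity \eqref{symidentity}, restricted to $p\neq 0$), and since $F=1-W$ with $W=Q^{\otimes2}\widetilde W$ one has $F-F^2 = W - W^2$ and $\int_{\Lambda^2}V_\ell(W-W^2)$ agrees with $\int V_\ell f_\ell\,\omega_{\ell,\lambda}$ up to the $p=0$ mode and the boundary corrections already estimated by \eqref{eq34}, \eqref{eq35} and Lemma~\ref{l2}, which are of size $(n/\ell)^2\log\ell$.

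The main obstacle, I expect, is not the one-mode diagonalization (elementary) but the bookkeeping that matches the Bogoliubov ground-state constant $-\tfrac12\sum_{p\neq0}(n\widehat{V_\ell f_\ell}(p))^2/(|p|^2-\mu)$ — or rather the cross-term produced by completing the square against the scattering solution — with the spatial integral $\tfrac{n^2}{2}\int_{\Lambda^2}V_\ell(F-F^2)$, while showing every discrepancy (the excluded zero mode, the difference between $|p|^2-\mu$ and $|p|^2$, the tail of the Born series beyond first order, and the Neumann boundary corrections in $W$ versus $\omega_{\ell,\lambda}$) is controlled by $C(n/\ell)^2\log\ell$. For this I would follow the corresponding step in \cite{haberberger2023free} essentially verbatim, using the pointwise bounds \eqref{pointwise esitimate of omega}, \eqref{eq34}, \eqref{eq35}, the estimate $\|\widehat{\omega_{\ell,\lambda}}\|$-type bounds giving $\sum_{p\neq0}|\widehat{\omega_{\ell,\lambda}}(p)|^2 |p|^2 \lesssim \log\ell/\ell$ and $\widehat{\omega_{\ell,\lambda}}(0)\le C\lambda^2/\ell$, together with $n\le C\ell$ from the hypothesis $\a_0 n/\ell<1/24$; the constants depend only on $\lambda$ and $\a_0$ as claimed. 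The dependence on $\mu$ enters only through the harmless factor $1/(|p|^2-\mu)\le 1/(\pi^2-\pi)$, uniformly in the allowed range $16\pi\a_0 n/\ell<\mu<\pi-8\pi\a_0 n/\ell$.
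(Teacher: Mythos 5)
Your overall strategy is the same as the paper's: a mode-by-mode Bogoliubov lower bound for $\mathbb H$, followed by matching the resulting momentum sum $\sum_{p\neq 0}(n\widehat{V_\ell f_\ell}(p))^2/|p|^2$ with the position-space integral $\int_{\Lambda^2}V_\ell(F-F^2)$ via the scattering equation \eqref{eq10} and the symmetrization identity \eqref{symidentity}, with the zero mode, the $\epsilon_{\ell,\lambda}$-term and the Neumann boundary corrections controlled exactly as in \cite{haberberger2023free}. One structural difference: the paper does not perform a c-number substitution for $a_0$. It applies the elementary two-operator inequality of \cite[Theorem 6.3]{lieb2001ground} directly to $b_p=n^{-1/2}a_0^\ast a_p$, using only $b_p^\ast b_p\le a_p^\ast a_p$ and $[b_p,b_p^\ast]\le 1$; this keeps $\mathbb H$ particle-number conserving throughout and avoids the extra $\cN_+$-error bookkeeping your substitution $a_0a_0/n\to 1$ would require (which, as an operator lower bound rather than an expectation estimate, is not entirely innocent).

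There is, however, one concrete quantitative gap: the per-mode bound $\sqrt{A^2-B^2}-A\ge -B^2/A$ is too lossy by a factor of $2$, and here the constant is the whole point. The sharp statement is
\begin{equation*}
A-\sqrt{A^2-B^2}=\frac{B^2}{A+\sqrt{A^2-B^2}}\le \frac{B^2}{2A}+C\frac{B^4}{A^3},
\end{equation*}
which with $B=n\widehat{V_\ell f_\ell}(p)$, $A=|p|^2-\mu$ gives $\mathbb H\ge -\frac{n^2}{4}\sum_{p\neq0}|\widehat{V_\ell f_\ell}(p)|^2/|p|^2-C(n/\ell)^2$ (this is precisely the Taylor-expansion step in the paper). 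Since the scattering equation gives $\widehat{V_\ell f_\ell}(p)/|p|^2=2\widehat{\omega}_{\ell,\lambda}(p)+\widehat{\epsilon}_{\ell,\lambda}(p)/|p|^2$, one finds $\frac{n^2}{4}\sum_{p\neq0}|\widehat{V_\ell f_\ell}(p)|^2/|p|^2=\frac{n^2}{2}\sum_{p\neq0}\widehat{V_\ell f_\ell}(p)\widehat{\omega}_{\ell,\lambda}(p)+\dots\approx\frac{n^2}{2}\int_{\Lambda^2}V_\ell(F-F^2)$, exactly the claimed constant. Your chain $\mathbb H\ge -\tfrac12\sum_p(n\widehat{V_\ell f_\ell}(p))^2/(|p|^2-\mu)$ instead produces $-n^2\int_{\Lambda^2}V_\ell(F-F^2)$ as the leading term, i.e.\ twice the allowed amount; this cannot be hidden in the $C(n/\ell)^2\log(\ell)$ error (the main term is of order $n^2/\ell\gg (n/\ell)^2\log\ell$), and it would destroy the cancellation against the $+\frac{n^2}{2}\int V_\ell(2F-F^2)$ term from Lemma~\ref{l3}, leaving $\frac{n^2}{2}\int V_\ell F^2$ instead of $\frac{n^2}{2}\int V_\ell F\approx 4\pi\a_0 n^2/\ell$ at leading order. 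Your later ``completing the square'' paragraph implicitly uses the correct factor $B^2/(4A)$, so the fix is only to replace the crude inequality by the sharp one and control the extra $B^4/A^3$ term, which is $O((n/\ell)^4)=O((n/\ell)^2)$ under the hypothesis $\a_0 n/\ell<1/24$.
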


\begin{proof}
    Recall the simple case of Bogoliubov's method \cite[Theorem 6.3]{lieb2001ground}, we have 
    \begin{equation}
        \mathcal{A}\left( a_{+}^{*}a_{+}+a_{-}^{*}a_{-} \right)+\mathcal{B}\left( a_{+}^{*}a_{-}^{*}+a_{+}a_{-} \right)\geq -\left( \mathcal{A}-\sqrt{\mathcal{A}^{2}-\mathcal{B}^{2}} \right)\frac{[a_{+},a_{+}^{*}]+[a_{-},a_{-}^{*}]}{2}  
    \end{equation}
    for all operators $a_{+},a_{-}$ on Fock space and real constants $\mathcal{A}, \mathcal{B}$ such that $[a_{+},a_{-}]=0$ and $\mathcal{A}>|\mathcal{B}|$. Taking 
    $$a_{+}=a_{-}=b_{p}=n^{-1/2}a_{0}^{*}a_{p},\ \ b_{p}^{*}b_{p}\leq a_{p}^{*}a_{p},\ \ [b_{p},b_{p}^{*}]\leq 1 \ \ {\rm for\ all}\ \  0\neq p \in \pi\mathbb{N}_{0}^{3}\,,$$
    we find that 
    \begin{equation}\label{eq36}
        \begin{aligned}
            \mathbb{H}&\geq \frac{1}{2}\sum_{p\neq 0}\left( |p|^{2}-\mu \right)\left( b_{p}^{*}b_{p}+b_{p}^{*}b_{p} \right)+n\widehat{V_{\ell}f_{\ell}}(p)\left( b_{p}^{*}b_{p}^{*}+b_{p}b_{p} \right)   \\
            &\geq-\frac{1}{2}\sum_{p\neq 0}\left( |p|^{2}-\mu-\sqrt{\left( |p|^{2}-\mu \right)^{2}-|n\widehat{V_{\ell}f_{\ell}}(p)\vert^{2} } \right) \\
            &\geq-\frac{n^{2}}{4}\sum_{p\neq 0}\frac{|\widehat{V_{\ell}f_{\ell}}(p)\vert^{2}}{|p|^{2}}-C \left( \frac{n}{\ell} \right)^{2}\,.
        \end{aligned} 
    \end{equation}
    Here, we used $16\pi\a_{0}\frac{n}{\ell}<\mu<\pi-8\pi\a_{0}\frac{n}{\ell}$ and $n\|\widehat{V_{\ell}f_{\ell}}\Vert_{L^{\infty}}\leq 8\pi\a_{0}\frac{n}{\ell}$. The last inequality comes from Taylor's expansion and the constant $C$ here depends only on $\a_{0}$.

    Recalling from \eqref{eq33}, we denote by $\widetilde{K}$ the operator on $L^{2}(\Lambda)$ with the integral kernel $\widetilde{K}(x,y)$. Splitting $\widetilde{K}$ into three operators, we have $\widetilde{K}=K_{{\rm m}}+K_{\epsilon}+\widetilde{K}_{r}$ with the integral kernel being 
\begin{align*}
    &K_{m}(x,y),\ K_{\epsilon}(x,y)=2\widetilde{Q}_{2}^{\epsilon}(x,y),\\ 
    &\widetilde{K}_{r}(x,y)=2\widetilde{Q}_{2}^{{\rm bc}}(x,y)+n\widehat{\omega}_{\ell,\lambda}(0)V_{\ell}(x-y).
\end{align*}
Note that $K_{{\rm m}}$ and $K_{\epsilon}$ are invariant on $\{\varphi_{0}\}^{\bot }$. Using the same calculation as in \eqref{eq1*}, we can rewrite the right hand side of \eqref{eq36} to obtain
\begin{equation}\label{eq36*}
    \mathbb{H}\geq-\frac{1}{4}\sum_{p\neq 0}\frac{\left\langle (K_{{\rm m}}+K_{\epsilon})\varphi_{p}\left|(K_{{\rm m}}+K_{\epsilon})\varphi_{p}\right.\right\rangle}{|p|^{2}}-C \left( \frac{n}{\ell} \right)^{2}
\end{equation}

    Next, we split the infinite sum into four parts as follows.
\begin{equation}\label{eq37}
    \begin{aligned}
        &\sum_{p\neq 0}\frac{\left\langle (K_{{\rm m}}+K_{\epsilon})\varphi_{p}\left|(K_{{\rm m}}+K_{\epsilon})\varphi_{p}\right.\right\rangle}{|p|^{2}}\\
        &=\sum_{p\neq 0}\frac{\left\langle \widetilde{K}\varphi_{p}\ \left|K_{{\rm m}}\varphi_{p}\right.\right\rangle}{|p|^{2}}+\sum_{p\neq 0}\frac{\left\langle K_{{\rm m}}\varphi_{p}\left|K_{\epsilon}\varphi_{p}\right.\right\rangle}{|p|^{2}}+\sum_{p\neq 0}\frac{\left\langle K_{\epsilon}\varphi_{p}\left|K_{\epsilon}\varphi_{p}\right.\right\rangle}{|p|^{2}}-\sum_{p\neq 0}\frac{\left\langle \widetilde{K}_{r}\varphi_{p}\ \left|K_{{\rm m}}\varphi_{p}\right.\right\rangle}{|p|^{2}}\\
        &:=(\mathrm{I})+(\mathrm{II})+(\mathrm{III})+(\mathrm{IV}).
    \end{aligned}
\end{equation}
We can recover $F$ from $(\mathrm{I})$. Recall $K_{{\rm m}}$ is diagonalized in Neumann basis, whose integral kernel satisfies
\begin{equation}
    K_{{\rm m}}(x,y)=\sum_{p\neq 0}2n|p|^{2}\widehat{\omega}_{\ell,\lambda}(p)\varphi_{p}(x)\varphi_{p}(y).
\end{equation}
Therefore, $|p|^{2}$ in the denominator is canceled out and we have
\begin{equation}\label{eq38}
    \begin{aligned}
        (\mathrm{I})=\int_{\Lambda^{2}}\tilde{K}(x,y)\sum_{p\neq 0}2n\widehat{\omega}_{\ell,\lambda}(p)\varphi_{p}(x)\varphi_{p}(y)\dd x\d y=2n^{2}\int_{\Lambda^{2}}V_{\ell}(F-F^{2}).\\
    \end{aligned} 
\end{equation}
Moreover, we use the pointwise estimates in \eqref{eq34} \eqref{eq35} and the definition of $\epsilon_{\ell,\lambda}$ in \eqref{eq10} to arrive at
\begin{equation}\label{eq39}
    \begin{aligned}
        |(\mathrm{II})|&=\left|\int_{\Lambda^{2}}4nW(x,y)\widetilde{Q}_{2}^{\epsilon}(x,y)\dd x\d y\right|\leq C  \|nW\Vert_{L^{2}}\|\tilde{Q}_{2}^{\epsilon}\Vert_{L^{2}} \\
        &\leq \frac{Cn}{\lambda^{3}\ell}\sqrt{Cn^{2}\int_{\Lambda^{2}}\frac{\1_{|x-y|\leq\lambda}\dd x\d y}{\left( 1+\ell|x-y| \right)^{2} }+C\lambda^{4}\left( \frac{n}{\ell} \right)^{2} }\leq\frac{C}{\lambda^{5/2}}\left( \frac{n}{\ell} \right)^{2}\\
    \end{aligned} 
\end{equation}
and 
\begin{equation}\label{eq40}
    |(\mathrm{III})|\leq C\int_{\Lambda^{2}}|\widetilde{Q}_{2}^{\epsilon}(x,y)|^{2}\dd x\d y\leq  \frac{C}{\lambda^{6}}\left( \frac{n}{\ell} \right)^{2}.
\end{equation}
For all $0\neq z\in\mathbb{Z}^{2}$ we have 
$$V_{\ell}(P_{z}(x)-y)=V_{\ell}(P_{z}(x)-y)\1_{{\rm d}(x,\partial\Lambda)\leq R\ell^{-1}}\leq\frac{CV_{\ell}(P_{z}(x)-y)}{1+\ell{\rm d}(x,\partial\Lambda)}$$
since ${\rm supp}V_{\ell}\subset B_{R\ell^{-1}}(0)$. In combination with the non-increasing assumption on $V$, we arrive at the bound 
$$|\tilde{Q}_{2}^{{\rm bc}}(x,y)|\leq Cn\sum_{z\in \mathbb{Z}^{3}}\frac{V_{\ell}(P_{z}(x)-y)}{1+\ell{\rm d}(x,\partial\Lambda)}\leq Cn\frac{V_{\ell}(x-y)}{1+\ell{\rm d}(x,\partial\Lambda)}.$$ 
Using $\|W\Vert_{L^{\infty}}\leq C$, which can be obtained from the pointwise estimate, we have
\begin{align*}
    \left|4n\int_{\Lambda^{2}}W(x,y)\widetilde{Q}_{2}^{{\rm bc}}(x,y)\dd x\d y\right|&\leq Cn^{2}\int_{\Lambda^{2}}\frac{V_{\ell}(x-y)}{1+\ell{\rm d}(x,\partial\Lambda)}\\
    &\leq Cn^{2}\ell^{-1}\int_{\Lambda}\frac{1}{1+\ell{\rm d}(x,\partial\Lambda)}\leq c \left( \frac{n}{\ell} \right)^{2}\log(\ell). \\
\end{align*} 
Therefore,
\begin{equation}\label{eq41}
    \begin{aligned}
        |(\mathrm{IV})|&=\left|4n\int_{\Lambda^{2}}W(x,y)\widetilde{Q}_{2}^{{\rm bc}}(x,y)\dd x\d y+2n\int_{\Lambda^{2}}W(x,y)n\widehat{\omega}_{\ell,\lambda}(0)V_{\ell}(x-y)\dd x\d y\right|\\
        &\leq C\left( \frac{n}{\ell} \right)^{2} \log(\ell)+\frac{Cn^{2}}{\ell}\int_{\Lambda^{2}}V_{\ell}(x-y)\dd x\d y\leq C\left( \frac{n}{\ell} \right)^{2} \log(\ell),
    \end{aligned}
\end{equation}
where we used $\widehat{\omega}_{\ell,\lambda}(0)\leq C\lambda^{2}/\ell$.

Inserting \eqref{eq37} and \eqref{eq38}--\eqref{eq41} into \eqref{eq36*}, we obtain the desired result.
\end{proof}

Finally, again using the pointwise estimate
$$\left|\tilde{Q}_{2}^{\rm bc}+\frac{n}{2}\widehat{\omega}_{\ell,\lambda}(0)V_{\ell}\right|\leq\frac{CnV_{\ell}(x-y)}{1+\ell{\rm d}(x,\partial\Lambda)},$$
we find by the Cauchy--Schwarz inequality
$$\pm\mathcal{V}\leq \delta H_{n,\ell}^{{\rm Neu}}+\frac{C}{\delta}\left( \frac{n}{\ell} \right)^{2} $$
for any $\delta>0$. This, together with Lemma~\ref{l2}, Lemma~\ref{l3} and Lemma~\ref{l4} concludes the proof of Theorem~\ref{prop:Neumann completing the square}, after optimizing over $\delta$.

 \section{Discrete Poincaré inequality}\label{appendix:discretepoincare}
 Given $M,d\in\N$, we define a graph $\Gamma=\llbracket M\rrbracket^d$, namely 
 \begin{equation}
     \Gamma:=\left\{1,2,\cdots, M\right\}^{d},\ \ x\sim y\ {\rm iff }\ \exists \ i \ {\rm s.t.}\abs{x_{j}-y_{j}}=\delta_{ij},
 \end{equation}
 where $x,y\in\Gamma$ are two arbitrary points and $\delta_{ij}$ is the Kronecker delta function. Suppose $f:\Gamma\rightarrow\C$ is a complex-valued function on $\Gamma$, we define its mean $f_{\rm Avg}$ by
 \begin{equation}
     f_{\rm Avg}:=\frac{1}{\abs{\Gamma}}\sum_{x\in\Gamma}f(x)
 \end{equation}
 with $\abs{\Gamma}=M^{d}$ being the cardinality of the set $\Gamma$.

\begin{proposition}\label{prop:discretepoincare}
     Let $p\in [1,\infty)$. There exist a constant $C=C(p,d)$, depending only on $p$ and $d$, such that 
     \begin{equation}
         \sum_{x\in\Gamma}\abs{f(x)-f_{\rm Avg}}^{p}\leq C M^{p}\sum_{x\in\Gamma}\left( \sum_{y:y\sim x}\abs{f(x)-f(y)} \right)^{p} 
     \end{equation}
     for all $\C$-valued function $f$ on $\Gamma$.
 \end{proposition}

 \begin{proof}
     For $s,x\in\Gamma$, we have the following identity:
     \begin{equation}
         f(x)-f(s)=\sum_{n=1}^{d}\sum_{t_{n}=s_{n}}^{x_{n}-1}\partial_{n}f \left( x_{1},\cdots,x_{n-1},t_{n},s_{n+1},\cdots,s_{d} \right),
     \end{equation}
     where 
     \begin{equation}
         \begin{aligned}
             \partial_{n}f \left( x_{1},\cdots,x_{n-1},t_{n},s_{n+1},\cdots,s_{d} \right)&:=f \left( x_{1},\cdots,x_{n-1},t_{n},s_{n+1},\cdots,s_{d} \right)\\
             &-f \left( x_{1},\cdots,x_{n-1},t_{n}+1,s_{n+1},\cdots,s_{d} \right).
         \end{aligned}
     \end{equation}
     We obtain by H\"older's inequality that 
     \begin{equation}
         \abs{f(x)-f(s)}^{p}\leq d^{\frac{p}{q}}\sum_{n=1}^{d}\left(\sum_{t_{n}=1}^{M}\abs{\partial_{n}f}\right)^{p}\leq(dM)^{\frac{p}{q}}\sum_{n=1}^{d}\sum_{t_{n}=1}^{M}\abs{\partial_{n}f}^{p}.
     \end{equation}
     Here $p,q$ are the H\"older conjugates; $p,q\in(0,\infty)$ such that $\frac{1}{p}+\frac{1}{q}=1$. Summing over $x$ and $s$, we have 
     \begin{equation}
         \sum_{x,s\in\Gamma}\abs{f(x)-f(s)}^{p}\leq(dM)^{\frac{p}{q}}\sum_{n=1}^{d}\sum_{x,s\in\Gamma}\sum_{t_{n}=1}^{M}\abs{\partial_{n}f}^{p}. 
     \end{equation}
     For fixed $n$, the functon $\partial_{n}f$ is independent of $\{s_{1},\cdots,s_{i},x_{i},\cdots,x_{d}\}$. A direct calculation implies 
     \begin{equation}
         \begin{aligned}
         \sum_{x,s\in\Gamma}\sum_{t_{n}=1}^{M}\abs{\partial_{n}f}^{p}&=\sum_{x_{1}=1}^{M}\cdots\sum_{x_{d}=1}^{M}\sum_{s_{1}=1}^{M}\cdots\sum_{s_{d}=1}^{M}\sum_{t_{n}=1}^{M}\abs{\partial_{n}f}^{p}\\
         &=M^{d+1}\sum_{x_{1}=1}^{M}\cdots\sum_{x_{n-1}=1}^{M}\sum_{t_{n}=1}^{M}\sum_{s_{n+1}=1}^{M}\cdots\sum_{s_{d}=1}^{M}\abs{\partial_{n}f}^{p}\\
         &\lesssim_{d}M\abs{\Gamma}\sum_{x\in\Gamma}\left( \sum_{y:y\sim x}\abs{f(y)-f(x)} \right)^{p}. 
         \end{aligned}
     \end{equation}
     Thus 
     \begin{equation}
         \begin{aligned}
             \sum_{x\in\Gamma}\abs{f(x)-f_{\rm Avg}}^{p}&\leq\frac{1}{|\Gamma|}\sum_{x,s\in\Gamma}\abs{f(x)-f(s)}^{p}\\
             &\lesssim_{d,p}M^{\frac{p}{q}+1}\sum_{x\in\Gamma}\left( \sum_{y:y\sim x}\abs{f(y)-f(x)} \right)^{p}\\
             &\lesssim_{d,p} M^{p}\sum_{x\in\Gamma}\left( \sum_{y:y\sim x}\abs{f(y)-f(x)} \right)^{p}.
         \end{aligned}
     \end{equation}
     Here we used H\"older's inequality again and $\frac{p}{q}+1=p$. This concludes the proof of Proposition \ref{prop:discretepoincare}.
 \end{proof}

%% ********************  Bibliographie  ********************

% %\renewcommand{\bibname}{\centerline{Bibliography}}
% \bibliographystyle{abbrv} % apalike, ieee, plain, alpha, unsrt, abbrv
% \bibliography{BEC}

\begin{thebibliography}{10}

    \bibitem{adhikari2021bose}
    A.~Adhikari, C.~Brennecke, and B.~Schlein.
    \newblock Bose--{E}instein condensation beyond the {G}ross--{P}itaevskii
      regime.
    \newblock {\em Ann. Henri Poincar\'{e}}, 22(4):1163--1233, 2021.
    
    \bibitem{armstrong2016mesoscopic}
    S.~Armstrong, T.~Kuusi, and J.-C. Mourrat.
    \newblock Mesoscopic higher regularity and subadditivity in elliptic
      homogenization.
    \newblock {\em Comm. Math. Phys.}, 347(2):315--361, 2016.
    
    \bibitem{boccato2018complete}
    C.~Boccato, C.~Brennecke, S.~Cenatiempo, and B.~Schlein.
    \newblock Complete {B}ose--{E}instein condensation in the {G}ross--{P}itaevskii
      regime.
    \newblock {\em Commun. Math. Phys.}, 359(3):975--1026, 2018.
    
    \bibitem{boccato2019bogoliubov}
    C.~Boccato, C.~Brennecke, S.~Cenatiempo, and B.~Schlein.
    \newblock Bogoliubov theory in the {G}ross--{P}itaevskii limit.
    \newblock {\em Acta Math.}, 222(2):219--335, June 2019.
    
    \bibitem{boccato2020optimal}
    C.~Boccato, C.~Brennecke, S.~Cenatiempo, and B.~Schlein.
    \newblock Optimal rate for {B}ose--{E}instein condensation in the
      {G}ross--{P}itaevskii regime.
    \newblock {\em Commun. Math. Phys.}, 376:1311--1395, 2020.
    
    \bibitem{boccato2023bose}
    C.~Boccato and R.~Seiringer.
    \newblock The {B}ose gas in a box with {N}eumann boundary conditions.
    \newblock {\em Ann. Henri Poincar\'e}, 24(5):1505--1560, 2023.
    
    \bibitem{bogoliubov1947theory}
    N.~N. Bogoliubov.
    \newblock On the theory of superfluidity.
    \newblock {\em Proc. Inst. Math. Kiev}, (9):89--103, 1947.
    \newblock Eng. Trans. \textit{J. Phys. (USSR)}, {\bf{11}}, 23--32 (1947). Rus.
      Trans. \textit{Izv. Akad. Nauk USSR}, {\bf 11}, 77--90 (1947).
    
    \bibitem{brennecke2024short}
    C.~Brennecke, M.~Brooks, C.~Caraci, and J.~Oldenburg.
    \newblock A short proof of {B}ose--{E}instein condensation in the
      {G}ross--{P}itaevskii regime and beyond.
    \newblock {\em Ann. Henri Poincar\'{e}}, 26(4):1353--1373, 2024.
    
    \bibitem{brennecke2022bose}
    C.~Brennecke, B.~Schlein, and S.~Schraven.
    \newblock {B}ose--{E}instein condensation with optimal rate for trapped bosons
      in the {G}ross--{P}itaevskii regime.
    \newblock {\em Math. Phys. Anal. Geom.}, 25(2):12, 2022.
    
    \bibitem{brietzke2020simple}
    B.~Brietzke, S.~Fournais, and J.~P. Solovej.
    \newblock A simple 2nd order lower bound to the energy of dilute {B}ose gases.
    \newblock {\em Commun. Math. Phys.}, 376(1):323--351, 2020.
    
    \bibitem{brietzke2020second}
    B.~Brietzke and J.~P. Solovej.
    \newblock The second-order correction to the ground state energy of the dilute
      {B}ose gas.
    \newblock {\em Ann. Henri Poincar\'{e}}, 21(2):571--626, 2020.
    
    \bibitem{caraci2021bose}
    C.~Caraci, S.~Cenatiempo, and B.~Schlein.
    \newblock {B}ose--{E}instein condensation for two dimensional bosons in the
      {G}ross--{P}itaevskii regime.
    \newblock {\em J. Stat. Phys.}, 183(3):39, 2021.
    
    \bibitem{chung1997spectral}
    F.~R.~K. Chung.
    \newblock {\em Spectral graph theory}, volume~92 of {\em CBMS Regional
      Conference Series in Mathematics}.
    \newblock American Mathematical Soc., 1997.
    
    \bibitem{conlon1988n}
    J.~G. Conlon, E.~H. Lieb, and H.-T. Yau.
    \newblock The ${N}^{7/5}$ law for charged bosons.
    \newblock {\em Commun. Math. Phys.}, 116(3):417--448, 1988.
    
    \bibitem{dyson1957ground}
    F.~J. Dyson.
    \newblock Ground-state energy of a hard-sphere gas.
    \newblock {\em Phys. Rev.}, 106(1):20, 1957.
    
    \bibitem{fournais2020length}
    S.~Fournais.
    \newblock Length scales for {BEC} in the dilute {B}ose gas.
    \newblock {\em Partial Differ. Equ., Spectral Theory Math. Phys.}, Ari Laptev
      Anniversary Volume:115--133, 2021.
    
    \bibitem{fournais2024lower}
    S.~Fournais, T.~Girardot, L.~Junge, L.~Morin, and M.~Olivieri.
    \newblock Lower bounds on the energy of the {B}ose gas.
    \newblock {\em Reviews in Mathematical Physics}, 36(09):2360004, 2024.
    
    \bibitem{fournais2024free}
    S.~Fournais, L.~Junge, T.~Girardot, L.~Morin, M.~Olivieri, and A.~Triay.
    \newblock The free energy of dilute {B}ose gases at low temperatures
      interacting via strong potentials.
    \newblock {\em arXiv preprint arXiv:2408.14222}, pages 1--33, 2024.
    
    \bibitem{fournais2020energy}
    S.~Fournais and J.~P. Solovej.
    \newblock The energy of dilute {B}ose gases.
    \newblock {\em Ann. of Math. (2)}, 192(3):893--976, 2020.
    
    \bibitem{fournais2023energy}
    S.~Fournais and J.~P. Solovej.
    \newblock The energy of dilute {B}ose gases {II}: The general case.
    \newblock {\em Invent. Math.}, 232(2):863--994, 2023.
    
    \bibitem{funaki2024quantitative}
    T.~Funaki, C.~Gu, and H.~Wang.
    \newblock Quantitative homogenization and hydrodynamic limit of non-gradient
      exclusion process.
    \newblock {\em arXiv preprint arXiv:2404.12234}, pages 1--83, 2024.
    
    \bibitem{haberberger2023free}
    F.~Haberberger, C.~Hainzl, P.~T. Nam, R.~Seiringer, and A.~Triay.
    \newblock The free energy of dilute bose gases at low temperatures.
    \newblock {\em arXiv preprint arXiv:2304.02405}, pages 1--68, 2023.
    
    \bibitem{haberberger2024upper}
    F.~Haberberger, C.~Hainzl, B.~Schlein, and A.~Triay.
    \newblock Upper bound for the free energy of dilute bose gases at low
      temperature.
    \newblock {\em arXiv preprint arXiv:2405.03378}, pages 1--54, 2024.
    
    \bibitem{hainzl2021another}
    C.~Hainzl.
    \newblock Another proof of {BEC} in the {GP}-limit.
    \newblock {\em J. Math. Phys.}, 62(5):051901, 2021.
    
    \bibitem{lee1957eigenvalues}
    T.~D. Lee, K.~Huang, and C.~N. Yang.
    \newblock Eigenvalues and eigenfunctions of a {B}ose system of hard spheres and
      its low-temperature properties.
    \newblock {\em Phys. Rev.}, 106(6):1135, 1957.
    
    \bibitem{lieb2001analysis}
    E.~H. Lieb and M.~Loss.
    \newblock {\em Analysis}, volume~14 of {\em Graduate {{Studies}} in
      {{Mathematics}}}.
    \newblock American Mathematical Soc., 2nd edition, 2001.
    
    \bibitem{lieb2002proof}
    E.~H. Lieb and R.~Seiringer.
    \newblock Proof of {Bose--Einstein} condensation for dilute trapped gases.
    \newblock {\em Phys. Rev. Lett.}, 88(17):170409, 2002.
    
    \bibitem{lieb2006derivation}
    E.~H. Lieb and R.~Seiringer.
    \newblock Derivation of the {G}ross--{P}itaevskii equation for rotating {B}ose
      gases.
    \newblock {\em Commun. Math. Phys.}, 264(2):505--537, 2006.
    
    \bibitem{lieb2000bosons}
    E.~H. Lieb, R.~Seiringer, and J.~Yngvason.
    \newblock Bosons in a trap: A rigorous derivation of the {G}ross--{P}itaevskii
      energy functional.
    \newblock {\em Phys. Rev. A}, 61(4):043602, 2000.
    
    \bibitem{lieb2003poincare}
    E.~H. Lieb, R.~Seiringer, and J.~Yngvason.
    \newblock Poincar{\'e} inequalities in punctured domains.
    \newblock {\em Ann. Math.}, 158(3):1067--1080, 2003.
    
    \bibitem{lieb2001ground}
    E.~H. Lieb and J.~P. Solovej.
    \newblock Ground state energy of the one-component charged {B}ose gas.
    \newblock {\em Commun. Math. Phys.}, 217(1):127--163, 2001.
    
    \bibitem{lieb2004ground}
    E.~H. Lieb and J.~P. Solovej.
    \newblock Ground state energy of the two-component charged {B}ose gas.
    \newblock {\em Commun. Math. Phys.}, 252(1):485--534, 2004.
    
    \bibitem{lieb2005mathematics}
    E.~H. Lieb, J.~P. Solovej, R.~Seiringer, and J.~Yngvason.
    \newblock {\em The mathematics of the {B}ose gas and its condensation},
      volume~34 of {\em Oberwolfach Seminars}.
    \newblock Birkh{\"a}user Basel, 2005.
    
    \bibitem{lieb1998ground}
    E.~H. Lieb and J.~Yngvason.
    \newblock {G}round {S}tate {E}nergy of the {L}ow {D}ensity {B}ose {G}as.
    \newblock {\em Phys. Rev. Lett.}, 80(12):2504, 1998.
    
    \bibitem{matouvsek1997embedding}
    J.~Matou{\v{s}}ek.
    \newblock On embedding expanders into $\ell_p$ spaces.
    \newblock {\em Isr. J. Math.}, 102(1):189--197, 1997.
    
    \bibitem{michelangeli2019ground}
    A.~Michelangeli, P.~T. Nam, and A.~Olgiati.
    \newblock Ground state energy of mixture of {B}ose gases.
    \newblock {\em Rev. Math. Phys.}, 31(02):1950005, 2019.
    
    \bibitem{nam2022optimal}
    P.~T. Nam, M.~Napi{\'o}rkowski, J.~Ricaud, and A.~Triay.
    \newblock Optimal rate of condensation for trapped bosons in the
      {G}ross--{P}itaevskii regime.
    \newblock {\em Analysis \& PDE}, 15(6):1585--1616, 2022.
    
    \bibitem{nam2023condensation}
    P.~T. Nam, J.~Ricaud, and A.~Triay.
    \newblock The condensation of a trapped dilute {B}ose gas with three-body
      interactions.
    \newblock {\em Probab. Math. Phys.}, 4(1):91--149, 2023.
    
    \bibitem{nam2016ground}
    P.~T. Nam, N.~Rougerie, and R.~Seiringer.
    \newblock Ground states of large bosonic systems: the {G}ross--{P}itaevskii
      limit revisited.
    \newblock {\em Analysis \& PDE}, 9(2):459--485, 2016.
    
    \bibitem{yau2009second}
    H.-T. Yau and J.~Yin.
    \newblock The second order upper bound for the ground energy of a {B}ose gas.
    \newblock {\em J. Stat. Phys.}, 136(3):453--503, 2009.

    \bibitem{junge2026propagation}
    L.~Junge.
    \newblock Propagation of condensation via {N}eumann localization in the dilute {B}ose gas.
    \newblock {\em arXiv preprint arXiv:2603.20776}, 2026.
    
    \end{thebibliography}

\end{document}